\newcommand{\removed}[1]{}
\newcommand{\cp}{\mathcal{P}}
\newcommand{\ra}{\rightarrow}
\newcommand{\rsa}{\rightsquigarrow}
\newcommand{\coleq}{\mathrel{\mathop:}=}
\newcommand{\bs}{\backslash}
\newcommand{\future}{\mathrm{future}}
\newcommand{\arrival}{\mathrm{arrival}}
\title{Naming and Counting in Anonymous Unknown Dynamic Networks\thanks{This work has in part been supported by the EU (European Social Fund - ESF) and Greek national funds through the Operational Programme ``Education and Lifelong Learning'' (EdLL), under the title ``Foundations of Dynamic Distributed Computing Systems'' (\textsf{FOCUS}).}}
\author{Othon Michail \and Ioannis Chatzigiannakis \and Paul G. Spirakis}
\institute{Computer Technology Institute \& Press ``Diophantus'' (CTI), Patras, Greece
\\Email:\email{ \{michailo, ichatz, spirakis\}@cti.gr}}
\begin{document}

\maketitle

\noindent\textbf{Abstract.} 
In this work, we study the fundamental naming and counting problems (and some variations) in networks that are anonymous, unknown, and possibly dynamic. In \emph{counting}, nodes must determine the size of the network $n$ and in \emph{naming} they must end up with unique identities. By \emph{anonymous} we mean that all nodes begin from identical states apart possibly from a unique leader node and by \emph{unknown} that nodes have no a priori knowledge of the network (apart from some minimal knowledge when necessary) including ignorance of $n$. Network dynamicity is modeled by the 1-interval connectivity model \cite{KLO10}, in which communication is synchronous and a worst-case adversary chooses the edges of every round subject to the condition that each instance is connected. We first focus on static networks with broadcast where we prove that, without a leader, counting is impossible to solve and that naming is impossible to solve even with a leader and even if nodes know $n$. These impossibilities carry over to dynamic networks as well. We also show that a unique leader suffices in order to solve counting in linear time. Then we focus on dynamic networks with broadcast. We conjecture that dynamicity renders nontrivial computation impossible. In view of this, we let the nodes know an upper bound on the maximum degree that will ever appear and show that in this case the nodes can obtain an upper bound on $n$. Finally, we replace broadcast with \emph{one-to-each}, in which a node may send a different message to each of its neighbors. Interestingly, this natural variation is proved to be computationally equivalent to a full-knowledge model, in which unique names exist and the size of the network is known.

\section{Introduction}
\label{sec:intro}

Distributed computing systems are more and more becoming dynamic. The static and relatively stable models of computation can no longer represent the plethora of recently established and rapidly emerging information and communication technologies. In recent years, we have seen a tremendous increase in the number of new mobile computing devices. Most of these devices are equipped with some sort of communication, sensing, and mobility capabilities. Even the Internet has become mobile. The design is now focused on complex collections of heterogeneous devices that should be robust, adaptive, and self-organizing, possibly moving around and serving requests that vary with time. Delay-tolerant networks are highly-dynamic, infrastructure-less networks whose essential characteristic is a possible absence of end-to-end communication routes at any instant. Mobility can vary from being completely predictable to being completely unpredictable. Gossip-based communication mechanisms, e-mail exchanges, peer-to-peer networks, and many other contemporary communication networks all assume or induce some sort of highly-dynamic communication network. 

The formal study of dynamic communication networks is hardly a new area of research. There is a huge amount of work in distributed computing that deals with causes of dynamicity such as failures and changes in the topology that are rather slow and usually eventually stabilize (like, for example, in self-stabilizing systems \cite{Do00}). However the low rate of topological changes that is usually assumed there is unsuitable for reasoning about truly dynamic networks. Even graph-theoretic techniques need to be revisited: the suitable graph model is now that of a \emph{dynamic graph} (a.k.a. \emph{temporal graph} or \emph{time-varying graph}) (see e.g. \cite{KKK00,Ko09,CFQS11}), in which each edge has an associated set of time-labels indicating availability times. Even fundamental properties of classical graphs do not carry over to their temporal counterparts. See, for example, \cite{KKK00} for a violation of Menger's theorem and \cite{AKL08} for the unsuitability of the standard network diameter metric. 

In this work, we adopt as our dynamic network model the \emph{$1$-interval connectivity} model that was proposed in the seminal STOC paper of Kuhn \emph{et al.} \cite{KLO10} building upon previous work of O'Dell and Wattenhofer \cite{OW05}. In this model, nodes proceed in \emph{synchronous rounds} and communicate by \emph{interchanging messages}. Message transmission is \emph{broadcast} in which, in every round, each node issues a single message to be delivered to all its neighbors. In this model, the network may change arbitrarily from round to round subject to the condition that in each round the network is connected. We only consider deterministic algorithms.

We focus on networks in which nodes are initially identical and, unless necessary, do not have any information about the network. In any case, nodes do not know the size $n$ of the network. By \emph{identical} we mean that they do not have unique identities (ids) and execute identical programs. So, this is some sort of minimal reliable distributed system, like, for example, a collection of particularly cheap and bulk-produced wireless sensor nodes. Nodes may execute the same program, because it is too costly to program them individually and their lack of ids may be due to the fact that ids require customization beyond the capabilities of mass production \cite{AFFR06}. Our only assumption is the existence of a unique leader that introduces some symmetry breaking. To further break the symmetry introduced by broadcast message transmission and in order to solve naming in dynamic networks, we allow to the nodes to send a different message to each one of their neighbors. 

\section{Related Work}
\label{sec:rw}

Distributed systems with worst-case dynamicity were first studied in \cite{OW05}. Their outstanding novelty was to assume a communication network that may change arbitrarily from time to time subject to the condition that each instance of the network is connected. They studied asynchronous communication and allowed nodes detect local neighborhood changes. They studied the \emph{flooding} and \emph{routing} problems in this setting and among others provided a uniform protocol for flooding that terminates in $O(Tn^2)$ rounds using $O(\log n)$ bit storage and message overhead, where $T$ is the maximum time it takes to transmit a message.

Computation under worst-case dynamicity was further and extensively studied in a series of works by Kuhn \emph{et al.} in the synchronous case. In \cite{KLO10}, among others, \emph{counting} (in which nodes must determine the size of the network) and \emph{all-to-all token dissemination} (in which $n$ different pieces of information, called tokens, are handed out to the $n$ nodes of the network, each node being assigned one token, and all nodes must collect all $n$ tokens) were solved in $O(n^2)$ rounds using $O(\log n)$ bits per message. Several variants of \emph{coordinated consensus} in 1-interval connected networks were studied in \cite{KOM11}. Requiring continuous connectivity has been supported by the findings of \cite{CKLL09}, where a \emph{connectivity service} for mobile robot swarms that encapsulates an arbitrary motion planner and can refine any plan to preserve connectivity while ensuring progress was proposed. 

Some recent works \cite{Ha11,HK11} present information spreading algorithms in worst-case dynamic networks based on \emph{network coding}. An \emph{open} setting in which nodes constantly join and leave has very recently been considered in \cite{APRU12}. For an excellent introduction to distributed computation under worst-case dynamicity see \cite{KO11}. Two very thorough surveys on dynamic networks are \cite{Sc02,CFQS11}.

The question concerning which problems can be solved by a distributed system when all processors use the same algorithm and start from the same state has a long story with its roots dating back to the seminal work of Angluin \cite{An80}, who investigated the problem of establishing a ``center''. She was the first to realize the connection with the theory of graph coverings, which was going to provide, in particular with the work of Yamashita and Kameda \cite{YK96}, several characterizations for problems that are solvable under certain topological constraints. Further investigation led to the classification of computable functions \cite{YK96,ASW88}. \cite{BV99} removed the, until then, standard assumption of knowing the network size $n$ and provided characterizations of the relations that can be
computed with arbitrary knowledge. Other well-known studies on unknown networks have dealt with the problems of robot-exploration and map-drawing of an unknown graph \cite{AH00,DP90,PP98} and on information dissemination \cite{AGVP90}. Sakamoto \cite{Sa99} studied the ``usefulness'' of initial conditions for distributed algorithms (e.g. leader or knowing $n$) on anonymous networks by presenting a transformation algorithm from one initial condition to another. Fraigniaud \emph{et al.} \cite{FPP00} assumed a unique leader in order to break symmetry and assign short labels as fast as possible. To circumvent the further symmetry introduced by broadcast message transmission they also studied other natural message transmission models as sending only one message to a single neighbor. Recently, and independently of our work, Chalopin \emph{et al.} \cite{CMM12} have studied the problem of naming anonymous networks in the context of snapshot computation. Finally, Aspnes \emph{et al.} \cite{AFFR06} studied the relative powers of reliable anonymous distributed systems with different communication mechanisms: anonymous broadcast, read-write registers, or read-write registers plus additional shared-memory objects.

\section{Contribution}
\label{sec:con}

We begin, in Section \ref{sec:prel}, by formally describing our distributed models. In Section \ref{sec:prob}, we formally define the problems under consideration, that is, naming, counting and some variations of these. Our study begins, in Section \ref{sec:static}, from static networks with broadcast. The reason for considering static networks is to arrive at some impossibility results that also carry over to dynamic networks, as a static network is a special case of a dynamic network. In particular, we prove that naming is impossible to solve under these assumptions even if a unique leader exists and even if all nodes know $n$. Then we prove that without a leader also counting is impossible to solve and naturally, in the sequel, we assume the existence of a unique leader. We provide an algorithm based on the \emph{eccentricity} of the leader (greatest distance of a node from the leader) that solves counting in linear time (inspired by the findings in \cite{FPP00}). Then, in Section \ref{sec:dynbr}, we move on to dynamic networks with broadcast. We begin with a conjecture (and give evidence for it) essentially stating that dynamicity renders nontrivial computations impossible even in the presence of a unique leader. \footnote{By \emph{nontrivial computation} we mean the ability to decide any language $L$ on input assignments s.t. $L\neq \Sigma^*$ and $L\neq\emptyset$, where input symbols are chosen from some alphabet $\Sigma$. For example, deciding the existence of any symbol in the input is considered nontrivial.} In view of this, we allow the nodes some minimal initial knowledge, which is an upper bound on the maximum degree that any instance will ever have. This could for example be some natural constraint on the capacity of the network. We provide a protocol that exploits this information to compute an upper bound on the size of the network. However, w.r.t. naming, the strong impossibility from Section \ref{sec:static} still persists (after all, knowledge of $n$ does not help in labeling the nodes). To circumvent this, in Section \ref{sec:seltr}, we relax our message transmission model to \emph{one-to-each} that allows each node to send a different message to each one of its neighbors. This is an alternative communication model that has been considered in several important works, like \cite{Ha11}, however in different contexts than ours. This further symmetry breaking, though minimal, allows us, by exploiting a leader, to uniquely label the nodes. By this, we establish that this model is equivalent to a full-knowledge model in which unique names exist and the size of the network is known. To arrive at this result, we provide four distinct naming protocols each with its own incremental value. The first presents how to assign ids in a fair context in which the leader will eventually meet every other node. The second improves on the first by allowing all nodes to assign ids in a context where no one is guaranteed to meet everybody else, but where connectivity guarantees progress. Both these are correct stabilizing solutions that do not guarantee termination. Then we provide a third protocol that builds upon the first two and manages to assign unique ids in 1-interval connected graphs while terminating in linear time. As its drawback is that messages may be $\Omega(n^2)$ bit long, we refine it to a more involved fourth protocol that reduces the bits per message to $\Theta(\log n)$ by only paying a small increase in termination time. 

\section{Preliminaries}
\label{sec:prel}

\subsection{The models}
\label{subsec:mod}

A \emph{dynamic network} is modeled by a \emph{dynamic graph} $G=(V,E)$, where $V$ is a set of $n$ nodes (or processors) and $E:\bbbn\ra \cp(E^\prime)$, where $E^\prime=\{\{u,v\}:u,v\in V\}$, (wherever we use $\bbbn$ we mean $\bbbn_{\geq 1}$) is a function mapping a round number $r\in\bbbn$ to a set $E(r)$ of bidirectional links drawn from $E^\prime$. Intuitively, a dynamic graph $G$ is an infinite sequence $G(1),G(2),\ldots$ of \emph{instantaneous graphs}, whose edge sets are subsets of $E^\prime$ chosen by a \emph{worst-case adversary}. A \emph{static network} is just a special case of a dynamic network in which $E(i+1)=E(i)$ for all $i\in\bbbn$. The set $V$ is assumed throughout this work to be \emph{static}, that is it remains the same throughout the execution.

A dynamic graph/network $G=(V,E)$ is said to be \emph{$1$-interval connected}, if, for all $r\in\bbbn$, the static graph $G(r)$ is connected \cite{KLO10}. Note that this allows the network to change arbitrarily from round to round always subject to the condition that it remains connected. In this work, we focus on $1$-interval connected dynamic networks which also implies that we deal with connected networks in the static-network case.

Nodes in $V$ are \emph{anonymous} that is they do not initially have any ids and they do not know the topology or the size of the network, apart from some minimal knowledge when necessary (i.e. we say that \emph{the network is unknown}). However, nodes have unlimited local storage. In several cases, and in order to break symmetry, we may assume a unique \emph{leader node} (or \emph{source}) $l$. If this is the case, then we assume that $l$ starts from a unique initial state $l_0$ (e.g. 0) while all other nodes start from the same initial state $q_0$ (e.g. $\perp$). All nodes but the leader execute identical programs. 

Communication is \emph{synchronous message passing} \cite{Ly96,AW04}, meaning that it is executed in discrete rounds controlled by a global clock that is available to the nodes and that nodes communicate by sending and receiving messages. Thus all nodes have access to the current round number via a local variable that we usually denote by $r$. We consider two different models of message transmission. One is \emph{anonymous broadcast}, in which, in every round $r$, each node $u$ generates a single message $m_u(r)$ to be delivered to all its current neighbors in $N_u(r)=\{v:\{u,v\}\in E(r)\}$. The other is \emph{one-to-each} in which a different message $m_{(u,i)}(r)$, $1\leq i\leq d_u(r)$, where $d_u(r)\coleq|N_u(r)|$ is the degree of $u$ in round $r$, may be generated for each neighbor $v_i$. In every round, the adversary first chooses the edges for the round; for this choice it can see the internal states of the nodes at the beginning of the round. In the one-to-each message transmission model we additionally assume that the adversary also reveals to each node $u$ a set of locally unique edge-labels $1,2,\ldots,d_u(r)$, one for each of the edges currently incident to it. Note that these labels can be reselected arbitrarily in each round so that a node cannot infer what the internal state of a neighbor is based solely on the corresponding local edge-name. Then each node transitions to a new state based on its internal state (containing the messages received in the previous round) and generates its messages for the current round: in anonymous broadcast a single message is generated and in one-to-each a different message is generated for each neighbor of a node. Note that, in both models, a node does not have any information about the internal state of its neighbors when generating its messages. Deterministic algorithms are only based on the current internal state to generate messages. This implies that the adversary can infer the messages that will be generated in the current round before choosing the edges. Messages are then delivered to the corresponding neighbors. In one-to-each, we assume that each message $m_i$ received by some node $u$ is accompanied with $u$'s local label $i$ of the corresponding edge, so that a node can associate a message sent through edge $i$ with a message received from edge $i$. These messages will be processed by the nodes in the subsequent round so we typically begin rounds with a ``receive'' command referring to the messages received in the previous round. Then the next round begins.   

\subsection{Causal Influence}

Probably the most important notion associated with a dynamic graph is the \emph{causal influence}, which formalizes the notion of one node ``influencing'' another through a chain of messages originating at the former node and ending at the latter (possibly going through other nodes in between). We use $(u,r)\rsa (v, r^\prime)$ to denote the fact that node $u$'s state in round $r$ ($r$-state of $u$) influences node $v$'s state in round $r^\prime$. Formally:

\begin{definition} [\cite{La78}]
Given a dynamic graph $G=(V,E)$ we define an order $\ra\subseteq (V\times\bbbn_{\geq 0})^2$, where $(u,r)\ra (v,r+1)$ iff $u=v$ or $\{u,v\}\in E(r+1)$. The \emph{causal order} $\rightsquigarrow\subseteq (V\times\bbbn_{\geq 0})^2$ is defined to be the reflexive and transitive closure of $\ra$.
\end{definition}

A very important aspect of 1-interval connectivity, that will be invoked in all our proof arguments in dynamic networks, is that it guarantees that the state of a node causally influences the state of another uninfluenced node in every round (if one exists). To get an intuitive feeling of this fact, consider a partitioning of the set of nodes $V$ to a subset $V_1$ of nodes that know the $r$-state of some node $u$ and to a subset $V_2=V\backslash V_1$ of nodes that do not know it. Connectivity asserts that there is always an edge in the cut between $V_1$ and $V_2$, consequently, if nodes that know the $r$-state of $u$ broadcast it in every round, then in every round at least one node moves from $V_2$ to $V_1$.

This is formally captured by the following lemma from \cite{KLO10}.

\begin{lemma} [\cite{KLO10}] \label{lem:inf}
For any node $u\in V$ and $r\geq 0$ we have
\begin{enumerate}
\item $|\{v\in V : (u,0)\rsa (v,r)\}|\geq\min\{r + 1, n\}$,
\item $|\{v\in V : (v,0)\rsa (u,r)\}|\geq\min\{r + 1, n\}$.
\end{enumerate}
\end{lemma}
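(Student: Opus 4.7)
The plan is to prove both parts by induction on $r$, treating part~1 and part~2 symmetrically. For part~1, define the forward influence set $S_r \coleq \{v\in V:(u,0)\rsa(v,r)\}$. The base case $r=0$ is immediate: by reflexivity of $\rsa$ we have $u\in S_0$, giving $|S_0|\geq 1=\min\{1,n\}$.

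For the inductive step, suppose $|S_r|\geq \min\{r+1,n\}$. If $|S_r|=n$, then since $S_r\subseteq S_{r+1}$ (again by reflexivity, as $(v,r)\ra(v,r+1)$ for every $v$), we get $|S_{r+1}|=n=\min\{r+2,n\}$ and we are done. Otherwise $S_r\subsetneq V$, so the pair $(S_r,V\setminus S_r)$ forms a nontrivial partition of $V$. By 1-interval connectivity, the instantaneous graph $G(r+1)$ is connected, so there must exist at least one edge $\{w,w'\}\in E(r+1)$ with $w\in S_r$ and $w'\in V\setminus S_r$. From $(u,0)\rsa(w,r)$ and $(w,r)\ra(w',r+1)$ (since $\{w,w'\}\in E(r+1)$), transitivity of $\rsa$ yields $(u,0)\rsa(w',r+1)$, i.e.\ $w'\in S_{r+1}$. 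Combined with $S_r\subseteq S_{r+1}$ this gives $|S_{r+1}|\geq |S_r|+1\geq \min\{r+2,n\}$, completing the induction.

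Part~2 follows by the dual argument with $T_r\coleq\{v\in V:(v,0)\rsa(u,r)\}$: the base case is $u\in T_0$, and in the inductive step, if $T_r\neq V$ connectivity of $G(r+1)$ produces an edge $\{w,w'\}\in E(r+1)$ with $w'\in T_r$ and $w\in V\setminus T_r$; then $(w,r)\ra(w',r+1)\rsa(u,r+1)$ adds $w$ to $T_{r+1}$, noting also that $T_r\subseteq T_{r+1}$ (if $(v,0)\rsa(u,r)$ then extending by the self-loop $(u,r)\ra(u,r+1)$ gives $(v,0)\rsa(u,r+1)$).

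The only real subtlety is making the $\min$ with $n$ behave correctly at the round where the influence set saturates the whole vertex set; this is handled cleanly by splitting the inductive step into the cases $|S_r|=n$ and $|S_r|<n$ as above. Everything else is a direct application of reflexivity/transitivity of the causal order and the guarantee that each instantaneous graph $G(r)$ is connected.
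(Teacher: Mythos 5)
Your part~1 is correct and is exactly the cut argument that the paper sketches in the paragraph preceding the lemma (the paper itself does not reprove the lemma, deferring to \cite{KLO10}): the forward-influence set $S_r$ grows by at least one node per round because $G(r+1)$ always has an edge across the cut $(S_r,V\setminus S_r)$, and reflexivity gives $S_r\subseteq S_{r+1}$.

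Part~2, however, has a genuine gap: it is not the symmetric dual you claim. In your chain $(w,r)\ra(w',r+1)\rsa(u,r+1)$, the last link requires $(w',r+1)\rsa(u,r+1)$, which holds only if $w'=u$; what you actually know is $(w',0)\rsa(u,r)$, i.e.\ that $w'$'s \emph{initial} state reached $u$ in the past, and this says nothing about $w'$'s round-$(r+1)$ state influencing $u$. The specific node you add can in fact fail to enter $T_{r+1}$: take $V=\{u,a,b,c\}$ with $E(1)=\{\{u,a\},\{a,b\},\{b,c\}\}$ and $E(2)=\{\{u,a\},\{a,c\},\{c,b\}\}$. Then $T_1=\{u,a\}$ and the cut edge $\{a,c\}\in E(2)$ has $a\in T_1$, $c\notin T_1$, yet $T_2=\{u,a,b\}$: since $u$'s only round-$2$ neighbor is $a$, any influence arriving at $u$ in round $2$ must come through $a$'s round-$1$ state, which $c$ never touched. (The bound $|T_2|\geq 3$ still holds, but via $b$, not via your cut node.) The standard repair, and essentially the argument of \cite{KLO10}, is to fix the target $(u,r)$ and induct \emph{backward over the origin time}: let $B_t\coleq\{v\in V:(v,t)\rsa(u,r)\}$ for $0\leq t\leq r$. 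Then $u\in B_r$, $B_t\subseteq B_{t-1}$ because $(v,t-1)\ra(v,t)$, and if $B_t\neq V$ connectivity of $G(t)$ gives an edge $\{w,w'\}\in E(t)$ with $w'\in B_t$ and $w\notin B_t$, whence $(w,t-1)\ra(w',t)\rsa(u,r)$ puts $w\in B_{t-1}\setminus B_t$. Descending from $t=r$ to $t=1$ yields $|B_0|\geq\min\{r+1,n\}$, which is exactly part~2.
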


\section{Problem Definitions}
\label{sec:prob}

\noindent\textbf{$k$-labeling}. An algorithm is said to solve the $k$-labeling problem if whenever it is executed on a network comprising $n$ nodes each node $u$ eventually terminates and outputs a \emph{label} (or \emph{name} or \emph{id}) $id_u$ so that $|\{id_u: u\in V\}|\geq k$.

\vspace{0.5cm}

\noindent\textbf{Naming}. The naming problem is a special case of the $k$-labeling problem in which it must additionally hold that $k=n$. This, in turn, implies that $id_u\neq id_v$ for all distinct $u,v\in V$ (so, unique labels are required for the nodes).

\vspace{0.5cm}

\noindent\textbf{Minimal (Consecutive) Naming}. It is a special case of naming in which it must additionally hold that the $n$ nodes output the labels $\{0,1,\ldots,n-1\}$.

\vspace{0.5cm}

\noindent\textbf{Counting Upper Bound}. Nodes must determine an upper bound $k$ on the network size $n$.

\vspace{0.5cm}

\noindent\textbf{Counting}. A special case of counting upper bound in which it must hold that $k=n$.  

\section{Static networks with broadcast}
\label{sec:static}

We here assume that the network is described by a static graph $G=(V,E)$, where $E\subseteq\{\{u,v\}:u,v\in V\}$. Moreover, the message transmission model is broadcast, that is, in every round, each node $u$ generates a single message to be delivered to all its neighbors. Note that any impossibility result established for static networks is also valid for dynamic networks as a static network is a special case of a dynamic network.

First of all, note that if all nodes start from the same initial state then, if we restrict ourselves to deterministic algorithms, naming is impossible to solve in general static networks, even if nodes know $n$. The reason is that in the worst-case they may be arranged in a ring (in which each node has precisely 2 neighbors) and it is a well-known fact \cite{An80,Ly96,AW04} that, in this case, in every round $r$, all nodes are in identical states.

We show now that impossibility persists even if we allow a unique leader and even if nodes have complete knowledge of the network. 

\begin{theorem} \label{the:namimp}
Naming is impossible to solve by deterministic algorithms in general anonymous (static) networks with broadcast even in the presence of a leader and even if nodes have complete knowledge of the network.
\end{theorem}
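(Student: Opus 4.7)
The plan is to exhibit a single network instance on which no deterministic algorithm can assign unique names, even given a leader and full topological knowledge. The canonical choice is the star $K_{1,n-1}$ with the leader $l$ placed at the center and the remaining $n-1$ nodes as leaves. In this graph there is a non-trivial automorphism group fixing $l$ that acts transitively on the leaves; the impossibility will follow from the standard fact that under broadcast, an algorithm cannot break a symmetry it is handed.

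Concretely, I would argue by induction on the round number $r$ that at the beginning of round $r$ every non-leader node is in exactly the same internal state. For the base case, all leaves begin in the common initial state $q_0$, and if the algorithm receives the description of $G$ as input, every leaf receives the very same description (since the star is vertex-transitive on the leaves, there is no canonical way for a leaf to learn ``which'' leaf it is). For the inductive step, under anonymous broadcast every leaf in round $r$ generates the very same single message $m(r)$, transmits it to $l$, and receives back whatever single message $l$ broadcasts in that round; hence the state-transition function, being deterministic, maps every leaf to the same state at the start of round $r+1$. Invoking Lemma~\ref{lem:inf} is unnecessary because we are in a static, very simple graph and the inductive argument is purely local.

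Now I would close the proof: the termination predicate and the output function are functions of the internal state only, so either no leaf ever terminates, or all leaves terminate simultaneously and output the same label $id$. In the latter case $|\{id_u : u\in V\}|\leq 2$ (the leader's label together with $id$), which for $n\geq 3$ is strictly less than $n$, contradicting naming. Thus no deterministic algorithm can solve naming on this instance, and since the instance is a legal input under the stated assumptions (unique leader, complete knowledge of $G$, broadcast), naming is impossible in general.

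The only subtle point, and the one I would spell out carefully, is what ``complete knowledge of the network'' means in an anonymous setting: it must mean knowledge of the isomorphism type of $G$ (and the identity of the leader as a distinguished vertex), not knowledge of one's own position in $G$ — otherwise anonymity would already be broken by hand. With that convention fixed, the star example is symmetric enough that the inductive invariant goes through unchanged, and this is really the whole content of the theorem.
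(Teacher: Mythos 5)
Your proposal is correct and follows essentially the same route as the paper: the star $K_{1,n-1}$ with the leader at the center, plus an induction on rounds showing all leaves remain in identical states under broadcast, hence at most two distinct labels are ever output. Your added remarks on $n\geq 3$ and on what ``complete knowledge'' means for anonymous nodes are sensible clarifications but do not change the argument.
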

\begin{proof}
Consider a star graph with the leader in the center (see Appendix \ref{app:star}).
\qed
\end{proof}

An obvious generalization is that, under the same assumptions as in the statement of the theorem, it is impossible to solve $k$-labeling for any $k\geq 3$. In Appendix \ref{app:deg}, we also provide some thoughts on a degree-based labeling.

We now turn our attention to the simpler counting problem. First we establish the necessity of assuming a unique leader.

\begin{theorem} \label{the:impcoun}
Without a leader, counting is impossible to solve by deterministic algorithms in general anonymous networks with broadcast.
\end{theorem}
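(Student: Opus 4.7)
The plan is to argue by contradiction using a classical anonymity/indistinguishability argument. Assume a deterministic algorithm $A$ solves counting in general anonymous static broadcast networks without a leader. I will exhibit two networks on which every node passes through exactly the same sequence of states, so that $A$ is forced to output the same value on both; yet the networks have different sizes, which contradicts correctness.

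For the two witness networks I would take rings $R_n$ and $R_{2n}$ on $n\geq 3$ and $2n$ nodes, respectively. Both are connected, anonymous, and regular of degree $2$. The key claim, proved by induction on the round number $r$, is that at the start of round $r$ every node of $R_n$ and every node of $R_{2n}$ is in the same internal state $q_r$. The base case holds because all nodes start from the common initial state $q_0$. For the inductive step I use determinism: since every node is in state $q_r$, all nodes generate the same broadcast message $m_r$; since every node has exactly two neighbors in either ring, the multiset of messages received is $\{m_r,m_r\}$ in both networks; hence the transition function delivers a common next state $q_{r+1}$ to all nodes in both networks.

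Once the induction is in place, the termination decision (being a function of the internal state) occurs in the same round on both networks, and every node outputs the same value $k$. Correctness of $A$ then forces $k=n$ on $R_n$ and simultaneously $k=2n$ on $R_{2n}$, which is impossible.

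The only delicate point I expect is verifying that, under anonymous broadcast, a node's local view truly depends only on its own state and on the multiset of incoming messages, with no extra side information (such as sender identifiers or port labels) that could distinguish $R_n$ from $R_{2n}$. This follows directly from the definition of the broadcast model in Section \ref{subsec:mod}, but it is the one place where the argument would fail if we had, say, the one-to-each variant with locally labeled ports, which is precisely why the impossibility is stated specifically for broadcast.
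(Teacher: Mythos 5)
Your proof is correct and takes essentially the same route as the paper's own proof (Appendix \ref{app:ring}): both derive a contradiction from the indistinguishability of anonymous deterministic broadcast executions on rings of different sizes. The only cosmetic difference is that you carry out the round-by-round induction directly on $R_n$ and $R_{2n}$, whereas the paper compares the $n$-ring to a ring of size $k+1$ (with $k$ the termination time) and invokes the $k$-neighborhood lemma of \cite{ASW88}.
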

\begin{proof}
If some algorithm counts in $k$ rounds the $n$ nodes of a static ring, then it fails on a ring of $k+1$ nodes (see Appendix \ref{app:ring}).
\qed
\end{proof}

In view of Theorem \ref{the:impcoun}, we assume again a unique leader in order to solve counting. Recall that the \emph{eccentricity} of a node $u$ is defined as the greatest geodesic distance between $u$ and $v$, over all $v\in V\backslash\{u\}$, where ``distance'' is equivalent to ``shortest path''. We first describe a protocol \textit{Leader\_Eccentricity} (inspired by the $Wake\& Label$ set of algorithms of \cite{FPP00}) that assigns to every node a label equal to its distance from the leader and then we exploit this to solve counting. We assume that all nodes have access to the current round number via a variable $r$.

\noindent\textbf{Protocol \textit{Leader\_Eccentricity}.} The leader begins with $label\leftarrow 0$ and $max\_asgned\leftarrow 0$ and all other nodes with $label\leftarrow\perp$. In the first round, the leader broadcasts an $assign$ $(1)$ message. Upon reception of an $assign$ $(i)$ message, a node that has $label=\perp$ sets $label\leftarrow i$ and broadcasts to its neighbors an $assign$ $(i+1)$ message and an $ack$ $(i)$ message. Upon reception of an $ack$ $(i)$ message, a node with $label\neq\perp$ and $label<i$ broadcasts it. Upon reception of an $ack$ $(i)$ message, the leader sets $max\_asgned\leftarrow i$ and if $r > 2\cdot (max\_asgned+1)$ then it broadcasts a $halt$ message, outputs its label, and halts. Upon reception of a $halt$ message, a node broadcasts $halt$, outputs its label, and halts.

\begin{theorem}
In $Leader\_Eccentricity$ nodes output $\epsilon+1$ distinct labels where $\epsilon$ is the eccentricity of the leader. In particular, every node outputs its distance from the leader.
\end{theorem}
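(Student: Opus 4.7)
The plan is to establish two invariants by induction on the round number and then combine them to read off the correctness of the output and the termination condition. First I would prove by induction on $i$ that every node at distance $i$ from the leader receives its first $assign$ message in round $i+1$, and that this message carries the value $i$; consequently, such a node sets $label\leftarrow i$ in round $i+1$ and broadcasts $assign(i+1)$ and $ack(i)$ in the same round. The base case is the leader itself, which has $label=0$ from the outset and broadcasts $assign(1)$ in round $1$. For the inductive step, a distance-$i$ node $v$ has a neighbor $w$ on a shortest path from the leader at distance $i-1$; by the inductive hypothesis $w$ broadcasts $assign(i)$ in round $i$, so $v$ receives it in round $i+1$. Moreover no $assign$ message of smaller index can have reached $v$ earlier, because $assign(j)$ is only generated by a node that just set its label to $j-1$ (hence at distance $j-1 < i-1$ from the leader), and such a node cannot be a neighbor of $v$ without contradicting the distance of $v$. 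Thus $v$'s label, once set, equals exactly its distance.

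The first invariant immediately gives the output claim: each node outputs its distance from the leader, and since the set of distances is $\{0,1,\ldots,\epsilon\}$ (every value up to the eccentricity is realized along a shortest path to a furthest node), the multiset of labels contains exactly $\epsilon+1$ distinct values. It remains to show that every node actually terminates with this label. For this, the second invariant I would prove is that $ack(i)$ first arrives at the leader in round $2i+1$: it is generated by a distance-$i$ node in round $i+1$, and each forwarder along the reverse shortest path strictly decreases the distance to the leader, so the ack traverses $i$ more edges in $i$ more rounds. Consequently $max\_asgned$ takes the value $\min(\epsilon,\lfloor (r-1)/2\rfloor)$ at the beginning of round $r$, and in particular equals $\epsilon$ from round $2\epsilon+1$ onwards and never exceeds $\epsilon$ (since no ack with index $>\epsilon$ ever exists).

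With these two invariants in hand, the halting analysis is routine. For $r<2\epsilon+3$ one checks case by case (according to the parity of $r$ and whether $\lfloor (r-1)/2\rfloor<\epsilon$) that the test $r>2(max\_asgned+1)$ fails, so the leader does not halt prematurely; at $r=2\epsilon+3$ we have $max\_asgned=\epsilon$ and the test becomes $2\epsilon+3>2\epsilon+2$, which holds, so the leader broadcasts $halt$. This $halt$ message then propagates outward along the BFS layers, reaching every distance-$j$ node by round $2\epsilon+3+j$, at which point it outputs the label $j$ already computed and halts.

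The main obstacle I anticipate is not in the BFS-style labeling itself but in the bookkeeping of the termination test: one must argue both that the condition cannot spuriously fire in an intermediate round (requiring the careful parity analysis above) and that no ack with a label strictly larger than $\epsilon$ can be fabricated through cycles or repeated forwarding, which is what justifies that $max\_asgned$ is monotone and capped at $\epsilon$. Once this is in place, the statement about $\epsilon+1$ distinct output labels is an immediate consequence of the distance-labeling invariant.
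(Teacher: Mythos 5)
Your proposal is correct and follows essentially the same approach as the paper's proof: an induction on distance showing that each distance-$i$ node is labeled $i$ in round $i+1$, combined with the observation that $ack(i)$ reaches the leader exactly in round $2i+1$, which drives the termination test. You simply make explicit several steps the paper leaves implicit (why no smaller $assign$ index can arrive first, and the parity bookkeeping for the halting condition), which is fine.
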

\begin{proof}
At time $2$, nodes at distance $1$ from the leader receive $assign$ $(1)$ and set their label to $1$. By induction on distance, nodes at distance $i$ get label $i$ at round $i+1$. In the same round, they send an ack that must arrive at the leader at round $2i+1$. If not then there is no node at distance $i$.
\qed
\end{proof}

We now use $Leader\_Eccentricity$ to solve counting in anonymous unknown static networks with a leader. We additionally assume that at the end of the $Leader\_Eccentricity$ process each node $u$ knows the number of neighbors $up(u)=|\{\{v,u\}\in E: label(v)=label(u)-1\}|$ it has to its upper level (it can store this during the $Leader\_Eccentricity$ process by counting the number of $assign$ messages that arrived at it from its upper level neighbors). Moreover, we assume that all nodes know the leader's eccentricity $\epsilon$ (just have the leader include $max\_asgned$ in its $halt$ message). Finally, let, for simplicity, the first round just after the completion of the above process be round $r=1$. For this, we just need all nodes to end concurrently the $Leader\_Eccentricity$ process. This is done by having node with label $i$ that receives or creates (this is true for the leader) a $halt$ message in round $r$ halt in round $(r+max\_asgned-i)$. Then the nodes just reset their round counters. 

\noindent\textbf{Protocol \textit{Anonymous\_Counting}.} Nodes first execute the modified $Leader\_Eccentricity$. 
When $\epsilon-r+1=label(u)$, a non-leader node $u$ receives a possibly empty (in case of no lower-level neighbors) set of $partial\_count_i$ $(rval_i)$ messages and broadcasts a $partial\_count$ $((1+\sum_i rval_i)/up(u))$ message. When $r=\epsilon+1$, the leader receives a set of $partial\_count_i$ $(rval_i)$ messages, sets $count\leftarrow 1+\sum_i rval_i$, broadcasts a $halt$ $(count)$ message, outputs $count$, and halts. When a non-leader $u$ receives a $halt$ $(count)$ message, it outputs $count$ and halts.

For a given round $r$ we denote by $rval_i(u)$ the $i$th message received by node $u$. 

\begin{theorem}
$Anonymous\_Counting$ solves the counting problem in anonymous static networks with broadcast under the assumption of a unique leader. All nodes terminate in $O(n)$ rounds and use messages of size $O(\log n)$.
\end{theorem}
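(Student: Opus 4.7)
The plan is to establish correctness by a reverse induction on distance from the leader, and then to account for rounds and message size separately. Let $L_i=\{u\in V:\mathrm{label}(u)=i\}$ denote the set of nodes at distance $i$ from the leader (so $L_0=\{\mathrm{leader}\}$), and let $T_i=\sum_{j\ge i}|L_j|$ be the total number of nodes at depth at least $i$. Let $f(u)$ denote the rational value transmitted by $u$ in its $partial\_count$ message. The key invariant I would prove is
$$\sum_{u\in L_i} f(u)\cdot up(u) = T_i \qquad (1\le i\le \epsilon).$$

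Base case $i=\epsilon$: a maximum-depth node has no lower-level neighbors, so $f(u)=1/up(u)$, every summand equals $1$, and the sum is $|L_\epsilon|=T_\epsilon$. Inductive step from $i+1$ to $i$: for each $w\in L_i$ the protocol yields $f(w)\cdot up(w) = 1+\sum_{u\in N(w)\cap L_{i+1}} f(u)$. Summing over $w\in L_i$ and swapping the order, every $f(u)$ with $u\in L_{i+1}$ is counted exactly $|N(u)\cap L_i|=up(u)$ times, so the double sum equals $\sum_{u\in L_{i+1}} f(u)\cdot up(u)=T_{i+1}$ by induction; adding the $|L_i|$ unit contributions gives $T_i$.

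Applying the invariant at $i=1$: since the leader is the sole level-$0$ node, every $u\in L_1$ has $up(u)=1$, hence $\sum_{u\in L_1} f(u)=T_1=n-1$. Therefore in round $\epsilon+1$ the leader sets $count\leftarrow 1+\sum_i rval_i = n$, which it then disseminates via the $halt$ message; every non-leader outputs the same value upon receipt. For round complexity, the modified \emph{Leader\_Eccentricity} preamble terminates synchronously in $O(\epsilon)$ rounds; the bottom-up aggregation reaches the leader after a further $\epsilon$ rounds; and the $halt$ broadcast propagates in at most $\epsilon$ additional rounds, giving $O(\epsilon)=O(n)$ in total since $\epsilon\le n-1$ in a connected graph.

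The step I expect to be the main obstacle is the $O(\log n)$ message-size bound: the partial counts $f(u)$ are rationals whose reduced-form denominators can in principle grow as products of the $up$-values encountered along a chain of levels. I would address this by transmitting each partial count as a pair of integers (numerator and denominator) and exploiting the a priori bound $f(u)\cdot up(u)\le T_i\le n$ together with a canonical common-denominator representation fixed at each level, so that both components remain polynomially bounded and hence fit in $O(\log n)$ bits. The inductive correctness argument above is insensitive to the particular encoding, so only this accounting needs to be carried out to close the proof.
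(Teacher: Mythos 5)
Your correctness and time arguments coincide with the paper's own proof. The invariant you state, $\sum_{u\in L_i} f(u)\cdot up(u)=T_i$, is exactly the paper's round-indexed invariant $\sum_{u:\,label(u)=\epsilon-r+1}\bigl(1+\sum_i rval_i(u)\bigr)=|\{u: label(u)\geq \epsilon-r+1\}|$ under the substitution $i=\epsilon-r+1$, and your reverse induction on the level (with the double-counting step: each $f(u)$ is received by exactly $up(u)$ upper-level neighbors at the right round) is the same induction the paper performs on $r$; your $O(\epsilon)=O(n)$ accounting for the preamble, the bottom-up aggregation, and the halt wave is also fine and in fact more explicit than the paper, which proves only the invariant.

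The genuine problem is the message-size step, and the fix you sketch would fail. The bound $f(u)\cdot up(u)\le T_i\le n$ controls the \emph{magnitude} of the partial counts, not their bit-length, and there is no polynomially bounded common denominator per level. Concretely, consider the ladder with leader $l$ adjacent to $a_1,b_1$ and, for $2\le i\le m$, edges $\{a_i,a_{i-1}\}$, $\{a_i,b_{i-1}\}$, $\{b_i,b_{i-1}\}$ (so $n=2m+1$, $\epsilon=m$, $up(a_i)=2$ for $i\ge 2$). Then $f(a_m)=1/2$ and inductively $f(a_{m-k})=(2^{k+1}-1)/(2^{k+1})$, so the value broadcast by $a_2$ has reduced denominator $2^{m-1}$ with $m=\Theta(n)$; any exact numerator/denominator encoding, or an integer numerator over a per-level common denominator (which here must be at least $2^{m-1}$), requires $\Omega(n)$ bits rather than $O(\log n)$. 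Closing this part needs a different ingredient that does not follow from the bound you invoke, e.g., truncating the partial counts to fixed-point values with enough precision that the accumulated additive error at the leader stays below $1/2$ (so it can round), or restructuring the aggregation to use integers only. To be fair, the paper's two-line proof is silent on both the message-size and the time claims, so on everything it actually proves you match it; but the route you propose for the $O(\log n)$ bound is not salvageable as stated.
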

\begin{proof}
By induction on the round number $r$, in the beginning of round $r\geq 2$, it holds that $\sum_{u:label(u)=\epsilon-r+1} \left ( 1+\sum_{i} rval_i(u)\right )=|\{u:label(u)\geq \epsilon -r+1\}|$. Clearly, in round $\epsilon+1$ it holds that $count=1+\sum_i rval_i(leader)=|\{u:label(u)\geq 0\}|=n$.
\qed
\end{proof}

\section{Dynamic Networks with Broadcast}
\label{sec:dynbr}

We now turn our attention to the more general case of 1-interval connected dynamic networks with broadcast. We begin with a conjecture stating that dynamicity renders nontrivial computation impossible (evidence for this conjecture can be found in Appendix \ref{app:conj}; see also \cite{OW05} for a similar conjecture in a quite different setting). Then we naturally strengthen the model to allow some computation.

\begin{conjecture} \label{conj:pred}
It is impossible to compute (even with a leader) the predicate $N_a\geq 1$, that is ``exists an $a$ in the input'', in general anonymous unknown dynamic networks with broadcast.
\end{conjecture}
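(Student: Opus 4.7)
I would prove Conjecture~\ref{conj:pred} by an adversarial indistinguishability argument that exploits the combination of unknown $n$, anonymity, and the adversary's freedom to reshape the topology round by round. Suppose, for contradiction, that a deterministic algorithm $A$ correctly decides $N_a\geq 1$ on every $1$-interval connected dynamic network with a leader.

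First I would fix a specific all-$b$ execution $E_1$: choose any dynamic graph $G_1=(V_1,E_1(\cdot))$ in which every non-leader node has input $b$, and run $A$ on it. By correctness, $A$ terminates at some round $T_1$ with the leader $l$ outputting NO; I would record $T_1$ and the complete transcript of states and messages. The point of fixing this execution is that it gives an explicit, finite horizon against which a fooling execution has to remain indistinguishable at $l$.

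Next I would construct a fooling execution $E_2$ on a larger node set $V_2 = V_1\cup\{p_1,\ldots,p_K,w\}$ with $K>T_1$, where $w$ receives input $a$ and the $p_i$'s receive input $b$. Round by round I would specify $G_2$ as follows: on $V_1$, replay the edges of $G_1(r)$; on the new nodes use a static path with vertices $p_1,p_2,\ldots,p_K,w$ in that order; and in every round add exactly one ``bridge'' edge from $p_1$ to some vertex of $V_1$, so that $G_2(r)$ is connected. By Lemma~\ref{lem:inf}, the causal future of $w$ after $r\leq T_1$ rounds has size at most $r+1<K+1$, so $p_1$ is never contaminated by $w$ within the relevant window and intuitively acts like a fresh anonymous $b$-node. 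An induction on $r\leq T_1$ should then show that the state of $l$ at the end of round $r$ in $E_2$ coincides with its state in $E_1$, forcing $A$ to output NO on $E_2$ as well, contradicting the presence of $w$.

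The main obstacle is exactly the justification of that induction, because the bridge inserts an extra incoming message at some vertex of $V_1$ that has no analogue in $E_1$. I would address this in one of two ways. The cleaner route is to keep the $V_1$-endpoint of the bridge outside the $T_1$-round causal past of $l$ in $G_2$, which can be arranged by taking $G_1$ itself to be long enough (say, a path with $l$ at one end) and routing the bridge always to the far end, so its effect does not propagate to $l$ in time. The alternative is a covering-space construction in the spirit of \cite{An80,YK96}, lifted to the dynamic setting, in which every extra bridge-message is matched by a symmetric sibling already present in the transcript of $E_1$, so that the multisets of messages delivered along the causal past of $l$ agree exactly. Either way, the hard step is showing that the adversary can truly make the boundary between the ``clean'' $V_1$-side and the ``contaminated'' tail invisible to $l$; once that is in hand, the contradiction is immediate, and this is where I suspect the conjecture's resistance to a short proof lies.
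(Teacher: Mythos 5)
First, note what the paper itself does with this statement: it is left as a \emph{conjecture}, with only informal evidence in Appendix~\ref{app:conj} (symmetric trees rooted at the leader, and the observation that certifying ``no $a$ anywhere'' seems to require the leader to know it has been causally influenced by the whole network, which circles back to counting). So there is no paper proof to match; the question is whether your fooling-execution argument actually closes the gap the authors could not. It does not, and the failure point is exactly the step you flag. The ``cleaner route'' is circular: if the algorithm $A$ is correct, the leader cannot output NO at round $T_1$ unless it has been causally influenced by the initial state of \emph{every} node of $V_1$ (otherwise flip an uninfluenced node's input to $a$ and the leader's view is unchanged, contradicting correctness). Hence on your long path $T_1$ is at least the distance from the far end to $l$. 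Meanwhile, $1$-interval connectivity forces the tail $p_1,\ldots,p_K,w$ to be attached to $V_1$ from round $1$ onward, and in the broadcast model the attachment immediately perturbs the bridge endpoint (it receives one more message, i.e., its degree changes), so the contamination starts at round $1$ at the far end and travels toward $l$ at the same one-hop-per-round speed as the far end's legitimate information. You therefore cannot guarantee that the perturbation misses $l$ by round $T_1$: nothing prevents $A$ from running, say, $3|V_1|$ rounds on the path, and you cannot enlarge $|V_1|$ to outrun $T_1$ because $T_1$ is itself a function of the graph you choose. Lemma~\ref{lem:inf} bounds how fast $w$'s state spreads, but the problem is not $w$'s state; it is the presence of the extra edge, which is visible instantly.

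The covering-space fallback is likewise not an argument yet, and it is where the real difficulty sits. With a unique leader the fibers over the leader must be trivial, and in the \emph{static} case the paper shows counting (hence ``wait $n-1$ rounds and decide'') is solvable with a leader, so no purely static lifting can work; any indistinguishability construction must exploit dynamicity in an essential way, which is precisely what your sketch defers (``every extra bridge-message is matched by a symmetric sibling'') without a construction. The paper's own evidence takes a different tack --- two nodes kept symmetric with respect to the leader in every round, plus the argument that distinguishing one source from two forever-identical sources seems impossible when the network can be saturated with identical messages --- and the authors explicitly stop short of a proof. In short, your plan reproduces the natural first attack, but both of your proposed repairs for the bridge-contamination problem fail or remain unspecified, so the statement remains, as in the paper, unproven.
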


In view of Theorem \ref{the:namimp}, which establishes that we cannot name the nodes of a static, and thus also of a dynamic, network if broadcast communication is assumed, and of the above conjecture, implying that in dynamic networks we cannot count even with a leader \footnote{This is implied because if we could count we could have a node wait at most $n-1$ rounds until it hears of an $a$ (provided that all nodes that have heard of an $a$ forward it) and if no reject.}, we start strengthening our initial model.

Let us assume that there is a unique leader $l$ that knows an upper bound $d$ on maximum degree ever to appear in the dynamic network, that is $d\geq\max_{u\in V,r\in\bbbn}\{d_u(r)\}$. We keep the broadcast message transmission.

Note first that impossibility of naming persists. However, we show that obtaining an upper bound on the size of the network now becomes possible, though exponential in the worst case.

\noindent\textbf{Protocol \textit{Degree\_Counting}.} The leader stores in $d$ the maximum degree that will ever appear and begins with $label\leftarrow 0$, $count\leftarrow 1$, $latest\_event\leftarrow 0$, $max\_label\leftarrow 0$, and $r\leftarrow 0$ while all other nodes begin with $label\leftarrow\perp$, $count\leftarrow 0$, and $r\leftarrow 0$. In the beginning of each round each node increments by one its round counter $r$. The leader in each round $r$ broadcasts $assign$ $(r)$. Upon reception of an $assign$ $(r\_label)$ message, a node with $label=\perp$ sets $label\leftarrow r\_label$ and from now in each round $r$ broadcasts $assign$ $(r)$ and $my\_label$ $(label)$. A node with $label=\perp$ that did not receive an $assign$ message sends an $unassigned$ $(r)$ message. All nodes continuously broadcast the maximum $my\_label$ and $unassigned$ messages that they have received so far. Upon reception of an $unassigned$ $(i)$ message, the leader, if $i>latest\_event$, it sets $count\leftarrow 1$ and, for $k=1,\ldots,i$, $count\leftarrow count+d\cdot count$, $max\_label\leftarrow i$, and $latest\_event\leftarrow r$ and upon reception of a $my\_label$ $(j)$ message, if $j>max\_label$, it sets $count\leftarrow 1$ and, for $k=1,\ldots,j$, $count\leftarrow count+d\cdot count$, $latest\_event\leftarrow r$, and $max\_label\leftarrow j$ (if receives both $i,j$ it does it for $\max\{i,j\}$). When it holds that $r> count+latest\_event-1$ (which must eventually occur) then the leader broadcasts a $halt$ $(count)$ message for $count$ rounds and then outputs $count$ and halts. Each node that receives a $halt$ $(r\_count)$ message, sets $count\leftarrow r\_count$, broadcasts a $halt$ $(count)$ message for $count$ rounds and then outputs $count$ and halts. 

\begin{theorem}
$Degree\_Counting$ solves the counting upper bound problem in anonymous dynamic networks with broadcast under the assumption of a unique leader. The obtained upper bound is $O(d^n)$ (in the worst case). 
\end{theorem}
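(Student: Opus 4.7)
The plan is to establish three statements at once: every node eventually halts, the announced value $count$ satisfies $count \geq n$, and $count = O(d^n)$.

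The first step is to bound the growth of the labelled set. Let $L(r)$ denote the number of nodes with $label \neq \perp$ at the end of round $r$; clearly $L(0)=1$. Because each labelled node broadcasts $assign(\cdot)$ to at most $d$ neighbours per round, and only currently unlabelled neighbours can become newly labelled, a direct induction gives $L(r+1) \leq (1+d)\, L(r)$, hence $L(r) \leq (1+d)^r$. Since the leader is the unique source of $assign$ messages, a node becomes labelled precisely when it is causally influenced by the leader, so Lemma~\ref{lem:inf} forces $L(n-1)=n$. Letting $k^*$ be the largest label ever assigned, we obtain $k^* \leq n-1$ and, crucially, $n = L(k^*) \leq (1+d)^{k^*}$.

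The second step is to show that the leader's variable $max\_label$ eventually equals $k^*$. The node that first acquires label $k^*$ broadcasts $my\_label(k^*)$ from round $k^*$ on, and every other node rebroadcasts the maximum $my\_label$ it has ever seen; a second application of Lemma~\ref{lem:inf}, now with that node as source, guarantees that the leader receives $my\_label(k^*)$ within at most $n-1$ additional rounds, and the same reasoning handles the maximum $unassigned$ index. From some finite round on no larger event can ever arrive, $latest\_event$ stops being updated, the predicate $r > count + latest\_event - 1$ fires, and the $halt(count)$ announcement reaches every node by a third application of Lemma~\ref{lem:inf}. At the halt we have $count = (1+d)^{k^*}$, so combining with the first step yields $n \leq count \leq (1+d)^{n-1} = O(d^n)$, settling all three claims.

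The delicate part of this plan, and my expected main obstacle, is ruling out a premature halt in which the leader stops with $max\_label = \hat k$ strictly smaller than $k^*$. I would exploit the fact that whenever $\hat k < k^*$ the growth bound $L(\hat k) \leq (1+d)^{\hat k}$ implies the existence of still-unlabelled nodes, which continually emit fresh $unassigned(r)$ messages with $r$ strictly larger than the current $latest\_event$. By Lemma~\ref{lem:inf} such a pending update reaches the leader within at most $n-1$ rounds, so either the current waiting window $(1+d)^{\hat k}$ already exceeds $n-1$ and the update arrives in time, or the window is refreshed by a strictly larger index before round $count + latest\_event$ is reached. A short induction on the successive values attained by $max\_label$ then closes the loop and excludes any premature halt. \qed
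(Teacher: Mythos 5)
Your overall skeleton is the paper's: the $(1+d)$-growth bound giving $n\leq(1+d)^{k^*}$ with $k^*\leq n-1$, the forwarding of the largest $my\_label$/$unassigned$ values to the leader via Lemma~\ref{lem:inf}, and the static-line worst case giving $O(d^n)$ are all the intended ingredients. The genuine gap is exactly in the step you flag as delicate: ruling out a premature halt. Your dichotomy is ``either the window $count=(1+d)^{\hat k}$ already exceeds $n-1$, so the pending update arrives within the $n-1$-round delivery bound, or the window is refreshed by a strictly larger index before round $count+latest\_event$.'' The second branch is precisely what has to be proved, and the only tool you invoke, the $n-1$-round delivery bound of Lemma~\ref{lem:inf}, cannot establish it when $count<n-1$: a fresh $unassigned$ timestamp or a larger $my\_label$ generated after $latest\_event$ is only guaranteed to reach the leader within $n-1$ rounds, which may be longer than the current waiting window, and the proposed ``induction on successive values of $max\_label$'' needs at every stage exactly this unproven ``refresh arrives inside the current window'' guarantee, so it is circular. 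The argument that actually closes this case is a counting one, not a delay bound: during the $count$ rounds after $latest\_event$ the leader is causally influenced by the $latest\_event$-states of at least $\min\{count+1,n\}$ distinct nodes (Lemma~\ref{lem:inf} time-shifted to start at $latest\_event$), whereas at most $(1+d)^{max\_label}=count$ nodes can carry labels $\leq max\_label$ (labels equal the round of assignment and the labelled set grows by a factor at most $1+d$ per round); hence some influencing node is either still unlabelled, injecting an $unassigned$ timestamp strictly larger than $latest\_event$ along the influence chain (which forwards maxima), or already holds a label strictly larger than $max\_label$, whose $my\_label$ travels the same chain; either way the leader updates strictly before the deadline, so it never halts with $max\_label<k^*$.

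A secondary slip in the same step: you argue the premature-halt case only through $unassigned$ messages (``still-unlabelled nodes \dots continually emit fresh $unassigned(r)$''), but it may happen that all remaining nodes become labelled immediately after $latest\_event$ with labels exceeding $max\_label$, so no fresh $unassigned$ timestamps are produced during most of the window; the argument must then run through the $my\_label$ channel, again via the counting step above rather than the $n-1$ bound. The rest of your proof (Step~1, the identification of the labelled set with the leader's causal influence set, and the final chain $n\leq count\leq(1+d)^{n-1}=O(d^n)$, conditional on the leader eventually adopting $max\_label=k^*$) is correct and essentially coincides with the paper's reasoning.
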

\begin{proof}
In the first round, the leader assigns the label $1$ to its neighbors and obtains an $unassigned$ $(1)$ message from each one of them. So, it sets $count\leftarrow (d+1)$ (in fact, note that in the first step it can simply set $count\leftarrow d_u(1)+1$, but this is minor), $latest\_event\leftarrow 1$, and $max\_label\leftarrow 1$. Now, if there are further nodes, at most by round $count+latest\_event-1$ it must have received an $unassigned$ $(i)$ message with $i>latest\_event$ or a $my\_label$ $(j)$ with $j>max\_label$. Note that the reception of an $unassigned$ $(i)$ message implies that at least $i+1$ distinct labels have been assigned because as long as there are unlabeled nodes one new label is assigned in each round to at least one node (this is implied by Lemma \ref{lem:inf} and the fact that all nodes with labels constantly assign new labels). Initially, one node (the leader) assigned to at most $d$ nodes label $1$. Then the $d+1$ labeled nodes assigned to at most $(d+1)d$ unlabeled nodes the label $2$, totalling $(d+1)+(d+1)d$, and so on.

In the worst-case, each label in $\{0,1,\ldots,n-1\}$ is assigned to precisely one node (e.g., consider a static line with the leader in the one endpoint). In this case the nodes count $O(d^n)$.
\qed
\end{proof}

We point out that if nodes have access to more drastic initial knowledge such as an upper bound $e$ on the \emph{maximum expansion}, defined as $\max_{u,r,r^\prime}\{|\future_{u,r}(r^\prime +1)|-|\future_{u,r}(r^\prime)|\}$ (maximum number of concurrent new influences ever occuring), where $\future_{(u,r)}(r^\prime)\coleq \{v\in V : (u,r)\rsa (v,r^{\prime})\}$, for $r\leq r^\prime$, then essentially the same protocol as above provides an $O(n\cdot e)$ upper bound. 

\section{Dynamic Networks with One-to-Each}
\label{sec:seltr}

The result of Theorem 1, in the light of (a) the conjecture of Section 7, and (b) the assumption of a broadcast message transmission model, clearly indicates that nontrivial computations in anonymous unknown dynamic networks are impossible even under the assumption of a unique leader. We now relax these assumptions so that we can state a correct naming protocol. We start by relaxing the assumption of a broadcast message transmission medium by offering to nodes access to a \emph{one-to-each message transmission mechanism}. We also assume a unique leader; without a leader, even under a one-to-each model, impossibility still persists as any pair of nodes that form a static ring will not be able to break symmetry.

\subsection*{1st Version - Protocol $Fair$}

We now present protocol $Fair$ in which the unique leader assigns distinct labels to each node of the network. The labels assigned are tuples $(r,h,i)$, where $r$ is the round during which the label was assigned, $h$ is the label of the leader node and $i$ is a unique number assigned by the leader. The labels can be uniquely ordered first by $r$, then by $h$ and finally by $i$ (in ascending order). 

Each node maintains the following local variables: $clock$, for counting the rounds of execution of the protocol (implemented due to synchronous communication, see Sec. \ref{subsec:mod}), $label$, for storing the label assigned by the leader, $state$, for storing the local state that can be set to $\{anonymous, named, leader\}$, and $counter$, for storing the number of labels generated. All nodes are initialized to $clock\leftarrow 0$, $id\leftarrow (0,\perp,\perp)$, $state\leftarrow anonymous$, and $counter\leftarrow 0$ except from the leader that is initialized to $clock\leftarrow 0$, $id\leftarrow (0,1,1)$, $state\leftarrow leader$, and $counter\leftarrow 1$. 

Each turn the leader $u$ consults the one-to-each transmission mechanism and identifies a set of locally unique edge-labels $1,2,\ldots,d(u)$, one for each of the edges incident to it. \footnote{Recall from Section 4.1 that these edge-labels can be reselected arbitrarily in each round (even if the neighbors remain the same) by the adversary so that a node cannot infer what the internal state of a neighbor is, based solely on the corresponding local edge-name.} The leader iterates the edge-label set and transmits to each neighboring node a different message $m_i$, $1 \le i \le d(u)$ that contains the unique label $(clock, label, counter + i)$. When the transmission is complete, it increases the variable $counter$ by $d(u)$. All the other nodes of the network do not transmit any messages (or transmit a null message if message transmission is compulsory).

All nodes under $state=anonymous$, upon receiving a (non-null) message set the local $label$ to the contents of the message and change $state$ to $named$. All the other nodes of the network simply ignore all the messages received.

At the end of the turn all nodes do $clock++$ (where `++' is interpreted as ``increment by one'').

Recall that a naming assignment is correct if \emph{all nodes} are assigned \emph{unique} labels. It is clear that $Fair$ is a non-terminating correct protocol, given the following \emph{fairness assumption}: the leader node at some point has become directly connected with each other node of the network (i.e., eventually meets all nodes).

\begin{lemma}
With one-to-each transmission, under the fairness assumption, and in the presence of a unique leader, protocol $Fair$ eventually computes a unique assignment for all the nodes in any anonymous unknown dynamic network.
\end{lemma}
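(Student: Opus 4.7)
The plan is to establish two properties that together imply the lemma: \emph{(i) uniqueness}, that every label ever adopted by a node is distinct from every other adopted label, and \emph{(ii) completeness}, that every non-leader node eventually transitions to state $named$ and so holds such a label. The leader itself starts with state $leader$ and its own distinguishing label $(0,1,1)$, so once (i) and (ii) are shown for the remaining nodes the whole network is uniquely labelled.

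For uniqueness, I would first observe that the leader is the sole source of non-null messages, and every label it emits has the form $(r, h, c)$, where $r$ is the current value of the leader's $clock$, $h$ is the leader's (fixed) identifier component, and $c$ is a value of its $counter$. Within a single round $r$, if the leader has $d_u(r)$ neighbours, the labels emitted have third components $counter+1,\ldots,counter+d_u(r)$, which are pairwise distinct, so the labels issued in round $r$ are pairwise distinct. Across different rounds, the first component $r$ strictly increases (since $clock$ is incremented each round), so no label issued in round $r$ can equal one issued in round $r'\neq r$. Once an $anonymous$ node receives its first non-null message it copies the contained label into $label$, sets $state\leftarrow named$, and thereafter ignores every further message; hence each node adopts at most one label, and combined with the emission argument no two distinct nodes ever hold the same label.

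For completeness, I would invoke the fairness hypothesis directly. Let $v$ be any non-leader node. By hypothesis there exists a round $r_v$ at which $v$ is a current neighbour of the leader $u$, i.e.\ $\{u,v\}\in E(r_v)$. In round $r_v$ the leader executes the one-to-each mechanism and dispatches a distinct label to each of its current neighbours, in particular one to $v$. If $v$ was still $anonymous$ at the start of $r_v$, it adopts that label and becomes $named$; if $v$ had already become $named$ in some earlier round $r' < r_v$, it already carries a label that, by the uniqueness argument, is different from every other adopted label. Either way $v$ ends up named with a unique label no later than round $r_v$; since $v$ was arbitrary, every node eventually carries a unique identifier.

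The only real obstacle worth noting is conceptual rather than technical: the entire argument is powered by the fairness hypothesis. A general $1$-interval connected adversary is under no obligation ever to place a given node adjacent to the leader, so without fairness there is no mechanism in $Fair$ that forces an arbitrary $v$ to be labelled. Consequently the lemma asserts only eventual stabilisation (not termination, and not correctness against a fully adversarial schedule), and this is precisely the gap that the subsequent refinements of the protocol in the paper are designed to close by letting already-named nodes extend the labelling along the edges of each instantaneous graph.
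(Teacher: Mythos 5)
Your proof is correct and follows the same line the paper takes (the paper in fact leaves the lemma unproved, simply asserting that correctness of $Fair$ is clear under the fairness assumption): uniqueness from the structure of the labels $(clock,h,counter+i)$ emitted solely by the leader, and completeness directly from the fairness hypothesis that the leader eventually meets every node. The only negligible slip is timing (a label sent in round $r_v$ is processed at the start of round $r_v+1$), which does not affect the eventual-correctness claim.
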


\subsection*{2nd Version - Protocol $Delegate$}

We now proceed by presenting a stronger protocol $Delegate$ (based on $Fair$) that is correct even without the fairness assumption. To achieve correctness the leader node delegates the role of assignment of labels to all the nodes that it encounters. Thus, without loss of generality, even if the leader does not encounter all other nodes of the network, due to the \emph{connectivity property}, all nodes will eventually hear from the leader. Therefore, all nodes will either receive a unique label from the leader or from another labeled node. The uniqueness among the labels generated is guaranteed since each label can be traced back to the node that issued it using the $h$ parameter.

In $Delegate$ the nodes maintain the same variables as in $Fair$. Each turn the leader performs the same actions as in $Fair$. Also similarly to $Fair$, each node that is in $state=anonymous$ does not transmit any message (or transmits a null message if message transmission is compulsory). Each node $u$ that is in $state=named$ performs similar actions as the leader node and transmits to each edge-label $i$ a message containing the unique label $(clock_u, label_u, counter_u + i)$ and then increases the variable $counter_u$ by $d(u)$. All nodes under $state=anonymous$, upon receiving one or more (non-null) messages that contain a label, select the message that contains the lowest label (i.e., the one with the lowest $h$ parameter) and set the local $label$ to the contents of the message and change $state$ to $named$. At the end of the turn all nodes do $clock++$.

\begin{lemma}
With one-to-each transmission, and in the presence of a unique leader, protocol $Delegate$ correctly computes a unique assignment for all the nodes in any anonymous unknown dynamic network.
\end{lemma}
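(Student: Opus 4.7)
The plan is to establish two properties: \emph{stabilization}---every node eventually adopts a label that it retains forever---and \emph{uniqueness}---no two named nodes ever hold the same label.

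For stabilization I would invoke $1$-interval connectivity directly. Let $N(r)\subseteq V$ denote the named set at the start of round $r$; by initialization $N(1)=\{l\}$. In any round $r$ with $N(r)\neq V$, connectivity of $G(r)$ guarantees at least one edge $\{u,v\}$ with $u\in N(r)$ and $v\notin N(r)$. Since every named node transmits a non-null labeled message along each incident edge in round $r$, such a $v$ receives at least one labeled message, selects one, and becomes named by the end of the round; hence $|N(r+1)|\geq |N(r)|+1$. Consequently $N(n)=V$ after at most $n-1$ rounds. This is essentially the content of Lemma~\ref{lem:inf} applied to the causal future of $l$. Moreover, once a node is named, the protocol never rewrites its $label$ field nor reverts its $state$, so labels are permanent.

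For uniqueness I would argue by strong induction on the round number $r$ that the labels held by named nodes at the end of round $r$ are pairwise distinct. The base case is immediate since only the leader is named, with label $(0,1,1)$. For the inductive step, each label first installed during round $r$ has the form $(r, L_u, counter_u+i)$, where $u$ is the named assigner, $L_u$ is its (by induction unique) label, and $i\in\{1,\ldots,d_u(r)\}$ is a locally unique edge-label provided by the one-to-each mechanism. Potential collisions split into three cases: (a) a newly installed label versus an older label from some round $r'<r$---the first coordinates $r$ and $r'$ already differ; (b) two new labels with distinct assigners $u\neq u'$---their second coordinates $L_u$ and $L_{u'}$ differ by the inductive hypothesis; (c) two new labels with the same assigner $u$ sent along edge-labels $i\neq i'$---the third coordinates $counter_u+i$ and $counter_u+i'$ differ. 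An anonymous recipient receiving several offers commits to exactly one of them, so no offered label is installed on more than one node.

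The main obstacle I anticipate is case (b): in a single round many named nodes may concurrently act as assigners, and the same anonymous recipient may see messages from several of them. The deterministic lowest-label tiebreak makes the assignment well-defined, but it is not what delivers global uniqueness; rather, uniqueness rests on the structural invariant that the $h$-coordinate of every generated label faithfully records the inductively unique label of its originator. Maintaining this invariant across all rounds, and verifying that it survives the counter bookkeeping and the concurrent broadcasts permitted by the one-to-each model, is the delicate point of the proof.
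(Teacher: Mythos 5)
Your proposal is correct and follows essentially the same route as the paper, which justifies the lemma only informally in the preceding paragraph: $1$-interval connectivity ensures every node eventually receives a label originating from the leader's delegation chain, and uniqueness holds because each generated label records its assigner via the $h$ parameter (together with the round and counter offsets). Your round-by-round induction with the three collision cases is simply a more explicit write-up of that same argument, so no further changes are needed.
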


\subsection*{3rd Version - Protocol $Dynamic\_Naming$ (terminating)}

The protocols $Fair$ and $Delegate$ compute a correct naming assignment (based on different assumptions) but do not terminate. Essentially the nodes continue to transmit labels for ever. We now present protocol $Dynamic\_Naming$ (based on $Delegate, Fair$) that manages to terminate.

$Dynamic\_Naming$ is an $O(n)$-time protocol that assigns unique ids to the nodes and informs them of $n$. As usual, there is a unique leader $l$ with id $0$ while all other nodes have id $\perp$.

The idea here is as follows. Similarly to $Delegate$, all nodes that have obtained an id assign ids and these ids are guaranteed to be unique. Additionally to $Delegate$, we have nodes that have obtained an id to acknowledge their id to the leader. Thus, all nodes send their ids and all nodes continuously forward the received ids so that they eventually arrive at the leader (simple flooding mechanism). So, at some round $r$, the leader knows a set of assigned ids $K(r)$. We describe now the termination criterion. If $|K(r)|\neq |V|$ then in at most $|K(r)|$ additional rounds the leader must hear (be causally influenced) from a node outside $K(r)$ (to see why, see Lemma \ref{lem:inf}). Such a node, either has an id that the leader first hears of, or has no id yet. In the first case, the leader updates $K(r)$ and in the second waits until it hears of a new id (which is guaranteed to appear in the future). On the other hand, if $|K(r)|=|V|$ no new info will ever arrive at the leader in the future and the leader may terminate after the $|K(r)|$-round waiting period ellapses.

\noindent\textbf{Protocol \textit{Dynamic\_Naming}.} Initially, every node has three variables $count\leftarrow 0$, $acks\leftarrow\emptyset$, and $latest\_unassigned\leftarrow 0$ and the leader additionally has $latest\_new\leftarrow 0$, $time\_bound\leftarrow 1$, and $known\_ids\leftarrow\{0\}$. A node with $id\neq\perp$ for $1\leq i\leq k$ sends $assign$ $(id,count+i)$ message to its $i$th neighbor and sets $count\leftarrow count+k$. In the first round, the leader additionally sets $known\_ids\leftarrow\{0,(0,1),(0,2),\ldots,(0,k)\}$, $latest\_new\leftarrow 1$, and $time\_bound\leftarrow 1+|known\_ids|$. Upon receipt of $l$ $assign$ messages $(rid_j)$, a node with $id=\perp$ sets $id\leftarrow \min_j\{rid_j\}$ (in number of bits), $acks\leftarrow acks\cup id$, sends an $ack$ $(acks)$ message to all its $k$ current neighbors, for $1\leq i\leq k$ sends $assign$ $(id,count+i)$ message to its $i$th neighbor, and sets $count\leftarrow count+k$. Upon receipt of $l$ $ack$ messages $(acks_j)$, a nonleader sets $acks\leftarrow acks\cup (\bigcup_j acks_j)$ and sends $ack$ $(acks)$. A node with $id=\perp$ sends $unassigned$ $(current\_round)$. Upon receipt of $l\geq 0$ $unassigned$ messages $(val_j)$, a node with $id\notin\{0,\perp\}$ sets $latest\_unassigned\leftarrow\max\{latest\_unassigned,\max_j\{val_j\}\}$ and sends $unassigned$ $(latest\_unassigned)$. Upon receipt of $l$ $ack$ messages $(acks_j)$, the leader if $(\bigcup_j acks_j)\backslash known\_ids\neq\emptyset$ sets $known\_ids\leftarrow known\_ids\cup (\bigcup_j acks_j)$, $latest\_new\leftarrow current\_round$ and $time\_bound\leftarrow current\_round+|known\_ids|$ and upon receipt of $l$ $unassigned$ messages $(val_j)$, it sets $latest\_unassigned\leftarrow\max\{latest\_unassigned,\max_j\{val_j\}\}$. If, at some round $r$, it holds at the leader that $r>time\_bound$ and $latest\_unassigned<latest\_new$, the leader sends a $halt$ $(|known\_ids|)$ message for $|known\_ids|-1$ rounds and then outputs $id$ and halts. Any node that receives a $halt$ $(n)$ message, sends $halt$ $(n)$ for $n-2$ rounds and then outputs $id$ and halts.

Denote by $S(r)=\{v\in V:(l,0)\rsa (v,r)\}$ the set of nodes that have obtained an id at round $r$ and by $K(r)$ those nodes in $S(r)$ whose id is known by the leader at round $r$, that is $K(r)=\{u\in V: \exists r^\prime \text{ s.t. } u\in S(r^\prime) \text{ and } (u,r^\prime)\rsa (l,r)\}$. 

\begin{theorem}
$Dynamic\_Naming$ solves the naming problem in anonymous unknown dynamic networks under the assumptions of one-to-each message transmission and of a unique leader. All nodes terminate in $O(n)$ rounds and use messages of size $\Theta(n^2)$.
\end{theorem}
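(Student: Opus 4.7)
The plan is to decompose the theorem into three goals: uniqueness of the issued ids, correctness of the leader's termination test, and the $O(n)$ time and $\Theta(n^2)$ message bounds. Uniqueness is inherited from the analysis of $Delegate$: the assigned ids form a tree rooted at the leader, with each node's id of the form $(p,j)$ where $p$ is the id of its issuer and $j=counter+i$ is the unique value the issuer had at the moment of assignment. Since $counter$ is advanced by the degree after every assignment round, the pairs $(p,j)$ minted by a single issuer are distinct, and the leading component $p$ separates ids minted by different issuers. The rule ``set $id$ to the lexicographically minimum of the received $rid_j$'' ensures that each node fixes its id exactly once, so the ids output by the various nodes are pairwise distinct.

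The crux is the correctness of the termination test. Let $r^\ast$ denote the last round at which the leader updates $latest\_new$, and write $s=|known\_ids|$ at that moment, so $time\_bound=r^\ast+s$. I will show that if the test $r>time\_bound$ with $latest\_unassigned<latest\_new$ ever succeeds, then $K(r^\ast)=V$. Assume for contradiction $K(r^\ast)\subsetneq V$, so $s=|K(r^\ast)|<n$. By (the shifted form of) Lemma \ref{lem:inf}, the $(r^\ast+s)$-state of the leader is causally influenced by the $r^\ast$-states of at least $\min\{s+1,n\}=s+1$ distinct nodes, so at least one such node $v$ lies outside $K(r^\ast)$. Because every named node continually re-broadcasts its accumulated $acks$ set and every named non-leader continually forwards the maximum $latest\_unassigned$ it has observed, the causal chain from $v$ to the leader actually transports either $v$'s id or a fresh $unassigned$ timestamp. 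If $v\in S(r^\ast)$, the chain carries $v$'s id and hence $known\_ids$ grows at some round in $(r^\ast,r^\ast+s]$, resetting $latest\_new$ past $r^\ast$, contradicting the choice of $r^\ast$. If $v\notin S(r^\ast)$, then $v$ emits $unassigned(r^\ast)$ at round $r^\ast$, and the chain delivers a value $\geq r^\ast$ to the leader by round $r^\ast+s$, so $latest\_unassigned\geq latest\_new$, contradicting the test. Thus $K(r^\ast)=V$, and in particular $|known\_ids|=n$ when the leader halts.

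For the time and message bounds, I would argue that because the leader and every named node constantly $assign$ to every current neighbor, connectivity ensures that at each round at least one new node becomes named, so $S(r)=V$ by round $n-1$; applying Lemma \ref{lem:inf} again, every node's state has influenced the leader by round $2n-2$, so $r^\ast=O(n)$. The termination test then fires by round $r^\ast+s+1=O(n)$, and the subsequent $halt$ flood of length $|known\_ids|-1=n-1$ reaches every node by Lemma \ref{lem:inf}, so all nodes halt in $O(n)$ rounds with the same value $n$ in hand. For message size, the dominant payload is the $ack$'s $acks$ set: each id is a nested pair of depth at most $n$ whose entries are integers in $[0,n]$, and up to $n$ such ids may appear, giving $\Theta(n^2)$ bits modulo logarithmic factors.

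The main obstacle I anticipate is making the contradiction in the termination step airtight: one has to verify that in \emph{every} round $r^\ast, r^\ast+1,\ldots,r^\ast+s$ the appropriate piece of information (an id or an $unassigned$ timestamp $\geq r^\ast$) is in fact present on the outgoing edges of each node along the causal chain, so that the causal-influence lower bound of Lemma \ref{lem:inf} translates into delivery of content, not merely of ``some'' message. A secondary subtlety is handling the possibility that $v$ transitions from unnamed to named somewhere along the chain, in which case one must check that at the transition round $v$'s freshly minted id is propagated onward and eventually reaches the leader within the allotted window of $s$ rounds.
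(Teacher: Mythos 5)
Your proposal is correct and follows essentially the same route as the paper's proof: uniqueness is inherited from $Delegate$, termination rests on the dichotomy that if $V\backslash K(r)\neq\emptyset$ then within $|K(r)|$ further rounds the leader is causally influenced (via Lemma \ref{lem:inf}) either by a node of $S(r)\backslash K(r)$, which delivers a new id, or by a node of $V\backslash S(r)$, which delivers a fresh $unassigned$ timestamp, and the $O(n)$ bound follows from $|S(r+1)|\geq\min\{|S(r)|+1,n\}$. Your explicit contradiction framing and the $\Theta(n^2)$ message-size accounting are just more detailed renderings of the same argument.
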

\begin{proof}
Unique names are guaranteed as in $Delegate$. Termination is as follows. Clearly, if $V\backslash K(r)\neq\emptyset$, either $|K(r+|K(r)|)|\geq |K(r)|+1$ or $(u,r)\rsa (l,r+|K(r)|)$ for some $u\in V\backslash S(r)$. The former is recognized by the leader by the arrival of a new id and the latter by the arrival of an $unassigned$ $(timestamp)$ message, where $timestamp\geq r$. On the other hand, if $K(r)=V$ then $|K(r+|K(r)|)|=|K(r)|$ and $\nexists u\in V\backslash S(r)$ s.t. $(u,r)\rsa (l,r+|K(r)|)$ as $V\backslash S(r)=\emptyset$. Finally, note that connectivity implies that $|S(r+1)|\geq \min\{|S(r)|+1,n\}$ which in turn implies $O(n)$ rounds until unique ids are assigned. Then another $O(n)$ rounds are required until nodes terminate.
\qed
\end{proof}

Clearly, by executing a simple $O(n)$-time process after $Dynamic\_Naming$ we can easily reassign minimal (consecutive) names to the nodes. The leader just floods a list of $(old\_id,new\_id)$ pairs, one for each node in the network.  


Though $Dynamic\_Naming$ is a correct and time-efficient terminating protocol for the naming problem it still has an important drawback. The messages sent may be of size $\Omega(n^2)$. We now refine $Dynamic\_Naming$ to arrive at a more involved construction that reduces the message size to $\Theta(\log n)$ by paying a small increase in termination time. We call this 4th version of our naming protocols $Individual\_Conversations$. Due to space restrictions, we only give that main idea here. The full presentation can be found in the Appendix \ref{app:indcon}.

\noindent\textbf{Protocol \textit{Individual\_Conversations} [Main Idea].} To reduce the size of the messages (i) the assigned names are now of the form $k\cdot d+id$, where $id$ is the id of the node, $d$ is the number of \emph{unique consecutive} ids that the leader knows so far, and $k\geq 1$ is a name counter (ii) Any time that the leader wants to communicate to a remote node that has a unique id it sends a message with the id of that node and a timestamp equal to the current round. The timestamp allows all nodes to prefer this message from previous ones so that the gain is twofold: the message is delivered and no node ever issues a message containing more than one id. The remote node then can reply in the same way. For the assignment formula to work, nodes that obtain ids are not allowed to further assign ids until the leader freezes all named nodes and reassigns to them unique consecutive ids. During freezing, the leader is informed of any new assignments by the named nodes and terminates if all report that no further assignments were performed.  

\begin{theorem}
$Individual\_Conversations$ solves the (minimal) naming problem in $O(n^3)$ rounds using messages of size $\Theta(\log n)$.
\end{theorem}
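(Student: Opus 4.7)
The plan is to verify three claims: (i) upon termination every node holds a unique name and these names form a consecutive prefix of $\bbbn$, (ii) every transmitted message uses $\Theta(\log n)$ bits, and (iii) the leader halts within $O(n^3)$ rounds. Correctness and message-size rest on the assignment formula and on the timestamp trick described in the main idea; the time analysis rests on Lemma \ref{lem:inf} (causal influence under 1-interval connectivity) applied to each ``individual conversation''.

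For uniqueness I would maintain the following epoch invariant: at the start of every epoch the set of named nodes is exactly $\{0,1,\ldots,d-1\}$ for some $d\leq n$ known to the leader, and only these nodes are authorized to assign new names during the epoch. Under this invariant, any name issued during the epoch has the form $k\cdot d + id$ with $id\in\{0,\ldots,d-1\}$ and $k\geq 1$. Two such values are equal only if $id_1\equiv id_2\pmod d$, which forces $id_1=id_2$ (same assigner) and hence $k_1=k_2$ (since each assigner increments its local counter strictly). Thus all names created in an epoch are pairwise distinct and distinct from the prefix $\{0,\ldots,d-1\}$. The freeze/relabel step at the end of an epoch collects all new names at the leader and reissues consecutive ids $\{0,\ldots,d'-1\}$, restoring the invariant with $d'\geq d$ and preserving uniqueness by construction.

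For the message bound I would inspect each message type and observe that each carries $O(1)$ ids and $O(1)$ timestamps; since ids are bounded by $n\cdot d = O(n^2)$ and timestamps by the round counter $O(n^3)$, each fits in $O(\log n)$ bits. For the time bound I would analyze a single epoch as three phases. Assignment phase: by Lemma \ref{lem:inf} within $n-1$ rounds every still-anonymous node is causally reached by some named node and receives a name, so this takes $O(n)$ rounds. Collection phase: each of the at most $n$ new names must be learned by the leader via one individual conversation, and by Lemma \ref{lem:inf} together with the timestamp rule (which ensures that in every round each node forwards exactly the id with highest timestamp it has seen) each such conversation completes in $O(n)$ rounds; hence $O(n^2)$ rounds suffice. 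Relabel phase: symmetric to the collection phase, $O(n^2)$ rounds. Each epoch therefore lasts $O(n^2)$ rounds, and since every non-terminal epoch strictly increases $d$ there are at most $n$ epochs, giving $O(n^3)$ total. The leader's halting criterion (no named node reports a new assignment during a full freeze sweep) fires precisely when $d=n$, at which point another $O(n)$ rounds propagate the halt, as in $Dynamic\_Naming$.

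The main obstacle is the formal justification that individual conversations are non-interfering and deliver in $O(n)$ rounds despite the adversarial topology changes. Concretely, I must show that the ``prefer the message with the largest timestamp'' rule implies that, for any target named node $v$, the leader's message tagged with round $r$ reaches $v$ by round $r+O(n)$ and conversely, while guaranteeing that at every round each node stores and forwards at most one id, so that the $\Theta(\log n)$ size bound is not violated. Once this single-id forwarding invariant is established, the rest of the analysis reduces to applying Lemma \ref{lem:inf} sequentially to the $O(n)$ conversations of each freeze cycle, and to bounding the number of cycles by the monotone growth of $d$.
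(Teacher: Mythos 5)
Your overall decomposition (uniqueness via the formula $k\cdot d+id$, logarithmic messages via ids bounded by $O(n^2)$ plus small timestamps, and $O(n)$ epochs of $O(n^2)$ rounds each) matches the paper's own justification, which is given only as the protocol description in Appendix E. However, two steps of your argument are genuinely off. First, the assignment-phase claim is wrong: being causally reached by a named node does not mean receiving a name. An assignment travels only across a single edge from an \emph{authorized} assigner (a node currently holding one of the consecutive ids $\{0,\ldots,d-1\}$) to a current neighbor, and newly named nodes are \emph{not} allowed to assign until the leader relabels them, since otherwise the formula $k\cdot d+id$ breaks. The adversary can therefore hide the assigners behind freshly named, non-assigning nodes, so within one epoch only \emph{one} new assignment is guaranteed --- and even that only if there is at least one round in which \emph{all} named nodes simultaneously send assignment messages over all incident edges, a point the paper stresses explicitly. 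This is exactly why the epoch count is $\Theta(n)$ in the worst case; your claim that each assignment phase names every remaining node in $O(n)$ rounds would make the whole cycle structure unnecessary and yield $O(n^2)$ overall, which should have signalled the error. Your final $O(n^3)$ bound survives only because you independently cap the number of epochs by the monotone growth of $d$.

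Second, the collection and relabel phases rely on the leader conversing with nodes whose ids it knows, but after an assignment window the leader (through the frozen assigners' reports) learns only ids that were \emph{possibly} assigned: a node offered several ids keeps the smallest and silently rejects the rest, and an assigner cannot always tell which happened. If the leader opens an individual conversation with an id that was never actually adopted, no node replies and the leader waits forever, breaking both termination and your $O(n)$-per-conversation accounting. The paper handles this explicitly via rejected-id lists (a node also answers requests naming an id in its rejected list, or alternatively the leader probes the possibly-assigned ids in increasing order, so the sought id is guaranteed to exist since nodes keep the smallest id offered). Your proposal never addresses this, and together with the single-conversation, largest-timestamp forwarding invariant that you defer (which the paper obtains by design, because the leader runs exactly one conversation at a time and every node forwards only the most recently timestamped message), this is where the substantive work of a complete proof would lie.
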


Finally, in the Appendix \ref{app:hdy}, we discuss how a high-dynamicity assumption can help in breaking the impossibility of Conjecture \ref{conj:pred} and exploit the above algorithmic ideas to solve naming under broadcast communication.

\newpage

\appendix

\section*{Appendix}

\section{Proof of Theorem \ref{the:namimp}}
\label{app:star}

\textbf{Theorem \ref{the:namimp}.} \emph{Naming is impossible to solve by deterministic algorithms in general anonymous (static) networks with broadcast even in the presence of a leader and even if nodes have complete knowledge of the network.}
\begin{proof}
Imagine a star graph in which the leader has $n-1$ neighbors (it is the \emph{center}) and every other node has only the leader as its unique neighbor (they are the \emph{leaves}). All leaf-nodes are in the same initial state and receive the same first message $m_1$ from the center. So they all transition to the same new state and generate the same outgoing message. It is straightforward to verify, by induction on the number of rounds, that in every round $r$ all leaf-nodes are in identical states. In fact, in any network in which some node is connected to at least two terminal nodes, that is nodes with no further neighbors, those terminal nodes will forever be in identical states.
\qed
\end{proof}

\section{Degree-Labeling}
\label{app:deg}

Consider now again the case of identical nodes (i.e. no leader). Note that, in the beginning of the 2nd round, each node $u$ can determine its degree $d(u)$ by just counting the number of received messages. If we restrict ourselves on networks with at least $k$ different degrees, that is $|\bigcup_{u\in V}d(u)|\geq k$ then we can solve $k$-labeling by simply having each node $u$ output $d(u)$. However, as there is no graph with $n$ different degrees, there is no single network class in which this solution would solve naming. To see that there is no simple graph (not even disconnected) with $n$ different degrees notice that the maximum degree is $n-1$ so the $n$ degrees must be $0,1,\ldots,n-1$. For a node to have degree $n-1$, all other nodes must have at least $1$, thus no node can have 0.

An interesting open question is ``what is the maximum $k$ for which the above algorithm solves $k$-labeling?''. Equivalently, ``what is the maximum number of different degrees that a connected simple graph can have?''.

\section{Proof of Theorem \ref{the:impcoun}}
\label{app:ring}

\textbf{Theorem \ref{the:impcoun}.} \emph{Without a leader, counting is impossible to solve by deterministic algorithms in general anonymous networks with broadcast.}
\begin{proof}
For the sake of contradiction, assume that an algorithm $A$ solves it. Then it solves it on a static ring $R_1$ of size $n$ with the first node terminating in $k\geq n$ rounds. Now consider a ring $R_2$ of size $k+1$. All nodes in both rings are initially in the same identical initial state $\perp$. Thus, any node in $R_2$ has the same $k$-neighborhood (states of nodes in distance at most $k$) as any node in $R_1$ which implies that after $k$ rounds these two nodes will be in the same state (see e.g. Lemma 3.1 in \cite{ASW88}). Thus a node in $R_2$ terminates after $k$ rounds and outputs $n$ which is a contradiction.
\qed
\end{proof}

\section{Evidence for Conjecture \ref{conj:pred}}
\label{app:conj}

The conjecture essentially based on the following fact. Even in a dynamic network, it can be the case that two nodes that are initially in the same state $a$ can for any number of rounds $T$ have the same $T$-neighborhood, which means that the whole history of received messages is the same in both nodes and thus they always transition to identical states. This is, for example, true in a symmetric tree rooted at the leader (e.g. a tree with $k$ identical lines leaving the root) in which the two nodes are in each round in equal distance from the root (even if this distance changes from round to round by moving the 2 nodes back and forth). In dynamic networks, it is also the case that for a node $u$ to causally influence the leader with its $t$-state, all nodes that receive the $t$-state of $u$ should continuously broadcast it at least until the leader receives it (then they could probably stop by receiving an ack or by using some known upper bound on the delivery time). Potentially, $O(n)$ nodes can receive the $t$-state of $u$ before it is delivered to the leader. It seems that if the leader could at some point decide that the received messages originate from two distinct nodes that are forever in identical states then it would also decide the same on a dynamic network containing only one of these nodes, as in both cases the whole network could be full of messages of the same kind. So, it seems impossible for the leader to determine whether the network contains at least two $a$s and such a process is necessary for the leader to count the size of the network. To determine whether there are no $a$s at all, in the absence of $a$s, the leader should somehow determine that it has been causally influenced by the whole network, which in turn requires counting.

\section{4th Version - Protocol $Individual\_Conversations$ (logarithmic messages)}
\label{app:indcon}

Though $Dynamic\_Naming$ is a correct and time-efficient terminating protocol for the naming problem it still has an important drawback. The messages sent may be of size $\Omega(n^2)$. There are two reasons for this increased message size. One is the method of assigning ids, in which the id of a node is essentially set to a pair containing the id of its fisrt parent and a counter. By induction on assignments, in which the leader assigns to a single node, that node assigns to another node, the third node to a fourth one, and so on, it is easy to see that ids may become $n$-tuples and thus have size $O(n)$. The other reason is that, for a node to acknowledge to the leader its assigned id, that node and all nodes that receive it must continuously broadcast it until the leader receives it (otherwise, delivery is not guaranteed by our dynamic network model). As $O(n)$ nodes may want to acknowledge at the same time, it follows that some node may need to continuously broadcast $O(n)$ ids each of size $O(n)$, thus $O(n^2)$. We now refine $Dynamic\_Naming$ to arrive at a more involved construction that reduces the message size to $\Theta(\log n)$ by paying a small increase in termination time. We call this protocol $Individual\_Conversations$. Due to the many low-level details of the protocol we adopt a high-level but at the same time precise and clear verbal presentation.

One refinement concerns the method of assigning ids. We notice that if some $d$ nodes have the unique consecutive ids $D=\{0,1,2,\ldots,k-1\}$, then we can have node with id $j\in D$ assign ids $k\cdot d+j$, for all $k\geq 1$. For example, if we have nodes $\{0,1,2,3\}$, then node 0 assigns ids $\{4,8,12,\ldots\}$, node 1 assigns $\{5,9,13,\ldots\}$, node 2 assigns $\{6,10,14,\ldots\}$, and node 3 assigns $\{7,11,15,\ldots\}$. Clearly, the assignments are unique and in the worst case $k,d,j=O(n)$, which implies that the maximum assigned id is $O(n^2)$ thus its binary representation is $\Theta(\log n)$. So, if we could keep the assigning nodes to have unique consecutive ids while knowing the maximum existing id (so as to evaluate the id-generation formula), we could get logarithmic ids.

Even if we could implement the above assignment method, if nodes continued to constantly forward all ids that they ever hear of then we would not do better than message sizes $O(n\log n)$ (a node forwards $O(n)$ ids each of size $O(\log n)$). Clearly, another necessary improvement is to guarantee communication between the leader and some node with unique id $j$ that the leader knows of, i.e. a pairwise conversation. It is important that a conversation is initiated by the leader so that we do not get multiple nodes trying to initiate a conversation with the leader, as this would increase the communication complexity. The leader sends a $request (rem\_id,current\_round)$ message, where $rem\_id$ is the id of the remote node and $current\_round$ is a timestamp indicating the time in which the request for conversation was initiated. Upon receipt of a $request (r\_id,timestamp)$ message all nodes such that $id\neq r\_id$ forward the message if it is the one with the largest timestamp that they have ever heard of. All nodes keep forwarding the message with the largest timestamp. When the remote node receives the message it replies with $report (id, current\_round)$, where $id$ is its own id. Now all nodes will forward the report as it is the one with the largest timestamp and the report will eventually reach the leader who can reply with another request, and so on. Note that a node that participates in a conversation need not know how much time it will take for the other node to reply. It only needs to have a guarantee that the reply will eventually arrive. Then it can recognize that this is the correct reply by the type, the id-component, and the timestamp of the received message. A nice property of 1-interval connected graphs is that it guarantees any such reply to arrive in $O(n)$ rounds if all nodes that receive it keep broadcasting it (which is the case here, due to the timestamps). So, in order to keep the message sizes low, we must implement the above communication method in such a way that the leader always participates in a single conversation, so that a single message ever floods the whole network (in particular, the most recently created one).

Now, let us further develop our id-assignment method. Clearly, in the 1st round the leader can keep id $0$ for itself and assign the unique consecutive ids $\{1,2,\ldots,d_l(1)\}$ to its $|d_l(1)|$ neighbors in round $1$. Clearly, each node with id $j$ in $K(1)=\{0,1,\ldots,|d_l(1)|\}$ can further assign the unique ids $k\cdot |K(1)|+j$, for $k\geq 1$. As before we can have a node stick to the smallest id that it hears from its neighbors but we additionally need that node to remember those ids that it rejected in a $rejected$ list. However, note that, if nodes outside $K(1)$ that obtain a unique id are not allowed to further assign ids, then we do not guarantee that all nodes will eventually obtain an id. The reason is that the adversary can forever hide the set $K(1)$ from the rest of the graph via nodes that have obtained an id and do not further assign ids (that is, all nodes in $K(1)$ may communicate only to nodes in $K(1)$ and to nodes that have obtained an id but do not assign and all nodes that do not have an id may communicate only to nodes that do not have an id and to nodes that have obtained an id but do not assign, which is some sort of deadlock). So we must somehow also allow to nodes that obtain ids to further assign ids. The only way to do this while keeping our assignment formula is to restructure the new assignments so that they are still unique and additionally consecutive. So, for example, if nodes in $K(1)$ have at some point assigned a set of ids $T$, then the leader should somehow reassign to nodes in $T$ the ids $\{|K(1)|,|K(1)|+1,\ldots,|K(1)|+|T|-1\}$.

So, at this point, it must be clear that the leader must first allow to the nodes that have unique consecutive ids (including itself) to perform some assignments. Then at some point it should freeze the assigning nodes and ask them one after the other to report the assignments that they have performed so far. Then, assuming that it has learned all the newly assigned unique ids, it should communicate with that nodes to reassign to them the next available unique consecutive ids and also it should inform all nodes with id of the maximum consecutive id that has been assigned so far. Now that all nodes with id have unique consecutive ids and know the maximum assigned, they can all safely use the id-assignment formula. In this manner, we have managed to also allow to the new nodes to safely assign unique ids. Finally, the leader unfreezes the nodes with ids one after the other, alows them to assign some new ids and at some point freezes them again to repeat the above process which we may call a cycle. 

A very important point that we should make clear at this point is that, in 1-interval connected graphs, a new assignment is only guaranteed if at least for one round all nodes that have ids send assignment messages to all their incident edges. As if some node with id selected to issue no assignment message to some of its edges then the adversary could make that edge be the only edge that connects nodes with ids to nodes without ids and it could do the same any time an edge is not used. Forunately, this is trivially guaranteed in the solution we have develped so far. When the leader unfreezes the last node with id, even if it chooses to start freezing the nodes in the subsequent round, provided that at least for that round it does not freezes itself, then in that round all nodes including itself are not frozen, thus all take an assignment step in that round (sending assignment messages to all their incident edges). This guarantees that for at least one round all assign at the same time which in turn guarantees at least one new delivery, provided that there are still nodes without ids.  

Another point that is still blur is the following. When the leader gets all reports from all nodes that were assigning ids during this cycle it cannot know which ids have been assigned but only which ids have been possibly assigned. The reason is that when a node $u$ assigns some ids then it is not guaranteed that in the next round it will have the same neighbors. So it can be the case that some of its neighbors chooses to stick to a smaller id sent by some other node and $u$ never notices it. So we have each node that assigns ids to remember the ids that have possibly been assigned and each node that is assigned an id to remember those ids that it rejected. Note that when a node $u$ tries to assigns an id by sending it via a local edge, then, in the next round when it receives from that local edge, it can tell whether that id was possibly assigned by simply having all nodes send their id in every round. If the received id from that edge was $\perp$ then the corresponding neighbor did not have an id thus it must have been assigned some id even if that was not the one sent by $u$. In any case, the id sent by $u$ will either be assigned or stored in the $rejected$ list of that node. On the other hand, if the received id was not equal to $\perp$ then the neighbor already had an id, $u$ knows that its assignment was for sure unsuccessful and may reuse this id in future assignments. The problem now is that, if the leader tries to initiate a conversation with an arbitrary id from those that have been possibly assigned, it can very well be the case that this id was not assigned and the leader may have to wait for a reply forever. Fortunately, this can be solved easily by having the unique node that has stored this id in its $rejected$ list to reply not only when it gets a $request$ message containing its own id but also when it gets a message containing an id that is also in its $rejected$ list. Another way is the following. As the leader has first collected all possibly delivered ids, it can order them increasingly and start seeking that smallest id. As nodes stick to the smallest they hear, the smallest of all possibly assigned was for sure selected by some node. Then that node may inform the leader of some rejected ids which the leader will remove from its ordering and then the leader may proceed to seek for the next id that has remained in its ordered list. It is not hard to see that this method guarantees that the leader always seeks for existing ids. 

Finally, the termination criterion is more or less the same as in $Dynamic\_Naming$. The leader knows that, if it allows all nodes with ids a common assignment step, then, provided that there are nodes without ids, at least one new assignment must take place. Clearly, if all nodes report that they performed no assignments, then the leader can terminate (and tell others to terminate) knowing that all nodes must have obtained an id. In the termination phase, it can reassign for a last time unique consecutive ids and inform all nodes of $n$. 

\begin{theorem}
$Individual\_Conversations$ solves the (minimal) naming problem in anonymous unknown dynamic networks under the assumptions of one-to-each message transmission and of a unique leader. All nodes terminate in $O(n^3)$ rounds and use messages of size $\Theta(\log n)$.
\end{theorem}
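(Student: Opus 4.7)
The plan is to verify three claims separately: (i) the output is a correct minimal naming, (ii) termination occurs within $O(n^3)$ rounds, and (iii) every message carries $\Theta(\log n)$ bits. I would first fix notation: after cycle $c$, let $D_c\subseteq V$ be the set of nodes holding \emph{unique consecutive} ids $\{0,1,\dots,|D_c|-1\}$, and let $d_c=|D_c|$; initially $D_0=\{l\}$, $d_0=1$.

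For (i), I would first prove an invariant that, at the start of each cycle, every node either has no id or has one of the consecutive ids in $D_c$, and each node of $D_c$ knows $d_c$. Under this invariant, when a node with id $j\in\{0,\dots,d_c-1\}$ sends assignment values $k\cdot d_c+j$ for $k\ge 1$, the set of potentially assigned ids is exactly $\{d_c,d_c+1,\dots\}$ partitioned across assigners by residue mod $d_c$; hence two assigners never issue the same value, and assigned values never collide with the consecutive range. A receiver keeps the minimum offered id and records all others in its $rejected$ list, so ids held by distinct nodes remain distinct. I would then argue that the leader's freeze/report/reassign phase preserves the invariant for cycle $c+1$: through the individual conversations the leader learns the set $P$ of \emph{possibly} assigned ids (including rejections), seeks them in increasing order (so that the smallest survivor is always held by some node that then tells the leader of further rejections, pruning $P$ down to the truly assigned set), and hands out the next consecutive ids $\{d_c,\dots,d_c+|P|-1\}$ to their current holders. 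Uniqueness and consecutiveness are thereby maintained, and minimality is established by the final reassignment triggered on termination.

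For progress and termination, I would use Lemma~\ref{lem:inf}: during the single round in each cycle when every labeled node is simultaneously unfrozen and issuing assignment messages on every incident edge, 1-interval connectivity forces at least one unlabeled node (if one exists) to be reached, so $d_{c+1}\ge d_c+1$ whenever $V\setminus D_c\neq\emptyset$. Hence there are at most $n$ cycles. To bound a cycle, I would show that a leader-initiated individual conversation with a node of known id $j$ completes in $O(n)$ rounds: the timestamped $request(j,t)$ is forwarded by every node holding the freshest timestamp, and by Lemma~\ref{lem:inf} it reaches its target within $n-1$ rounds; the $report$ returns symmetrically in $O(n)$ rounds. Since a cycle consists of $O(n)$ such conversations (one report per assigner plus one reassignment per newly labeled node), each cycle takes $O(n^2)$ rounds, giving a total of $O(n^3)$. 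The termination criterion is analogous to $Dynamic\_Naming$: if a cycle produces $P=\emptyset$, the progress lemma implies $V=D_c$, and the leader floods a $halt$ carrying $n$ which, by another application of Lemma~\ref{lem:inf}, reaches all nodes in $O(n)$ further rounds.

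For (iii), I would observe that after $c\le n$ cycles we have $d_c\le n$, and every assigned value is of the form $k\cdot d_c+j$ with $k\le \lceil n/d_c\rceil+1$ and $j<d_c$, so every id is bounded by $O(n^2)$ and fits in $\Theta(\log n)$ bits; timestamps are bounded by the global running time $O(n^3)$, also $\Theta(\log n)$ bits; every transmitted message ($assign$, $request$, $report$, $ack$, $halt$) contains a constant number of such fields. The step I expect to be most delicate is the correctness of the ``seek in increasing order'' procedure, because the leader must never wait indefinitely for a reply from a non-existent id; the key observation I would nail down is that the minimum element of $P$ not yet removed through some holder's $rejected$ list is always held by a real node, so each conversation the leader initiates is guaranteed to terminate in $O(n)$ rounds, preserving the overall bound.
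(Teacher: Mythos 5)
Your plan is correct and follows essentially the same route as the paper's (informal, verbal) argument in the appendix: uniqueness via the $k\cdot d + j$ formula over consecutive ids with minimum-selection and $rejected$ lists, progress from one common unfrozen assignment round per cycle via 1-interval connectivity, leader-driven timestamped conversations each completing in $O(n)$ rounds with the increasing-order seek to avoid waiting on never-assigned ids, termination when a cycle reports no new assignments, and $\Theta(\log n)$ messages because ids are $O(n^2)$ and timestamps are polynomially bounded. No substantive difference from the paper's reasoning.
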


\section{Higher Dynamicity}
\label{app:hdy}

Given some high-dynamicity assumption (some sort of fairness), naming can be solved under broadcast communication. Intuitively, to break the symmetry that is responsible for the impossibility of Conjecture \ref{conj:pred}, we require that, given sufficient time, a node has influenced every other node in different rounds. Formally, there must exist $k$ (not necessarily known to the nodes) s.t $(\arrival_{(u,r)}(v),\arrival_{(u,r+1)}(v),\ldots,$ $\arrival_{(u,r+k-1)}(v))\neq$ $(\arrival_{(u,r)}(w),$ $\arrival_{(u,r+1)}(w),\ldots,\arrival_{(u,r+k-1)}(w))$, $\forall u\in$ $V,r\geq 0,v,w\in V\bs\{u\}$, where $\arrival_{(u,r)}(v)$ $\coleq \min\{r^\prime>r :(u,r)\rsa (v,r^\prime)\}$ (first time that $v$ is causally influenced by the $r$-state of $u$). We also allow nodes to have time to acknowledge to their neighbors (formally, we may duplicate each instance of the dynamic graph, i.e. make it persist for two rounds). 

The idea is to have the leader name its first $d_l(1)$ neighbors say with id $1$. What the leader can exploit is that it knows the number of $1$s in the network as it knows its degree in round 1. Now every node $v$ named 1 counts $\arrival_{(l,i)}(v)$ for all $i\geq 2$. This is achieved by having the leader continuously send an $(l,current\_round)$ pair, unnamed nodes constantly forward it, and having every node named $1$ set $\arrival_{(l,i)}(v)$ to the round in which an $(l,i)$ pair was first delivered. It is clear that, due to the above high-dynamicity assumption, the vector $s(v)=(1,\arrival_{(l,2)}(v),\arrival_{(u,3)}(v),\ldots,$ $\arrival_{(u,k+2)}(v))$ (in $k$ rounds) will be a unique $id$. As the named nodes do not know $k$, we have them continuously send $(s,current\_round)$ pairs, where $s$ is the above vector, and all other nodes continuously forward these pairs. At some point, the leader must hear from $d_l(1)$ different $s$ vectors with equal timestamps and then it knows that the $1$s have obtained unique ids. Now the leader can stop them from further changing their ids. Then it allows them (including itself) to concurrently assign id 2 for at least one step. Assigning nodes count the number of assignments that they perform (in a variable $count$ initially $0$). This is done by having a node $u$ that was assigned id 2 in round $r$ to respond to its neighbors the number $l$ of nodes that tried to assigned $2$ to it. Then each of the assigning $1$s sets $count\leftarrow count + 1/l$. When the leader freezes the $1$s, they report their $count$ variable and by summing them the leader learns the number, $j$, of $2$s assigned. Then the leader sends again $(l,current\_round)$ pairs and waits to receive $j$ different $s$ vectors with equal timestamps. The process continues in such cycles until at some point all existing unique ids report that they didn't manage to assign the current id being assigned.  


\begin{thebibliography}{CKLWL09}

\bibitem[AFR06]{AFFR06}
J.~Aspnes, F.~E. Fich, and E.~Ruppert.
\newblock Relationships between broadcast and shared memory in reliable
  anonymous distributed systems.
\newblock {\em Distributed Computing}, 18[3]:209--219, February 2006.

\bibitem[AGVP90]{AGVP90}
B.~Awerbuch, O.~Goldreich, R.~Vainish, and D.~Peleg.
\newblock A trade-off between information and communication in broadcast
  protocols.
\newblock {\em Journal of the ACM (JACM)}, 37[2]:238--256, 1990.

\bibitem[AH00]{AH00}
S.~Albers and M.~Henzinger.
\newblock Exploring unknown environments.
\newblock {\em SIAM J. Comput.}, 29[4]:1164--1188, 2000.

\bibitem[AKL08]{AKL08}
C.~Avin, M.~Kouck\'{y}, and Z.~Lotker.
\newblock How to explore a fast-changing world (cover time of a simple random
  walk on evolving graphs).
\newblock In {\em Proceedings of the 35th international colloquium on Automata,
  Languages and Programming, Part I}, pages 121--132. Springer-Verlag, 2008.

\bibitem[Ang80]{An80}
D.~Angluin.
\newblock Local and global properties in networks of processors (extended
  abstract).
\newblock In {\em Proceedings of the twelfth annual ACM symposium on Theory of
  computing}, STOC '80, pages 82--93. ACM, 1980.

\bibitem[APRU12]{APRU12}
J.~Augustine, G.~Pandurangan, P.~Robinson, and E.~Upfal.
\newblock Towards robust and efficient computation in dynamic peer-to-peer
  networks.
\newblock In {\em Proceedings of the Twenty-Third Annual ACM-SIAM Symposium on
  Discrete Algorithms}, pages 551--569. SIAM, 2012.

\bibitem[ASW88]{ASW88}
H.~Attiya, M.~Snir, and M.~K. Warmuth.
\newblock Computing on an anonymous ring.
\newblock {\em J. ACM}, 35[4]:845--875, October 1988.

\bibitem[AW04]{AW04}
H.~Attiya and J.~Welch.
\newblock {\em Distributed computing: fundamentals, simulations, and advanced
  topics}, volume~19.
\newblock Wiley-Interscience, 2004.

\bibitem[BV99]{BV99}
P.~Boldi and S.~Vigna.
\newblock Computing anonymously with arbitrary knowledge.
\newblock In {\em Proceedings of the eighteenth annual ACM symposium on
  Principles of distributed computing}, pages 181--188. ACM, 1999.

\bibitem[CFQS11]{CFQS11}
A.~Casteigts, P.~Flocchini, W.~Quattrociocchi, and N.~Santoro.
\newblock Time-varying graphs and dynamic networks.
\newblock {\em Ad-hoc, Mobile, and Wireless Networks}, pages 346--359, 2011.

\bibitem[CKLWL09]{CKLL09}
A.~Cornejo, F.~Kuhn, R.~Ley-Wild, and N.~Lynch.
\newblock Keeping mobile robot swarms connected.
\newblock In {\em Proceedings of the 23rd international conference on
  Distributed computing}, DISC'09, pages 496--511. Springer-Verlag, 2009.

\bibitem[CMM12]{CMM12}
J.~Chalopin, Y.~M{\'e}tivier, and T.~Morsellino.
\newblock On snapshots and stable properties detection in anonymous fully
  distributed systems (extended abstract).
\newblock In G.~Even and M.~Halld{\'o}rsson, editors, {\em Structural
  Information and Communication Complexity}, volume 7355 of {\em Lecture Notes
  in Computer Science}, pages 207--218. Springer Berlin / Heidelberg, 2012.
\newblock 10.1007/978-3-642-31104-8\_18.

\bibitem[Dol00]{Do00}
S.~Dolev.
\newblock {\em Self-stabilization}.
\newblock MIT Press, Cambridge, MA, USA, 2000.

\bibitem[DP90]{DP90}
X.~Deng and C.~Papadimitriou.
\newblock Exploring an unknown graph.
\newblock In {\em Foundations of Computer Science, 1990. Proceedings., 31st
  Annual Symposium on}, pages 355--361. IEEE, 1990.

\bibitem[FPPP00]{FPP00}
P.~Fraigniaud, A.~Pelc, D.~Peleg, and S.~P{\'e}rennes.
\newblock Assigning labels in unknown anonymous networks (extended abstract).
\newblock In {\em Proceedings of the nineteenth annual ACM symposium on
  Principles of distributed computing}, PODC '00, pages 101--111. ACM, 2000.

\bibitem[Hae11]{Ha11}
B.~Haeupler.
\newblock Analyzing network coding gossip made easy.
\newblock In {\em Proceedings of the 43rd annual ACM symposium on Theory of
  computing}, pages 293--302. ACM, 2011.

\bibitem[HK11]{HK11}
B.~Haeupler and D.~Karger.
\newblock Faster information dissemination in dynamic networks via network
  coding.
\newblock In {\em Proceedings of the 30th annual ACM SIGACT-SIGOPS symposium on
  Principles of distributed computing}, PODC '11, pages 381--390. ACM, 2011.

\bibitem[KKK00]{KKK00}
D.~Kempe, J.~Kleinberg, and A.~Kumar.
\newblock Connectivity and inference problems for temporal networks.
\newblock In {\em Proceedings of the thirty-second annual ACM symposium on
  Theory of computing}, STOC '00, pages 504--513. ACM, 2000.

\bibitem[KLO10]{KLO10}
F.~Kuhn, N.~Lynch, and R.~Oshman.
\newblock Distributed computation in dynamic networks.
\newblock In {\em Proceedings of the 42nd ACM symposium on Theory of
  computing}, STOC '10, pages 513--522. ACM, 2010.

\bibitem[KO11]{KO11}
F.~Kuhn and R.~Oshman.
\newblock Dynamic networks: models and algorithms.
\newblock {\em SIGACT News}, 42:82--96, March 2011.
\newblock Distributed Computing Column, Editor: Idit Keidar.

\bibitem[KOM11]{KOM11}
F.~Kuhn, R.~Oshman, and Y.~Moses.
\newblock Coordinated consensus in dynamic networks.
\newblock In {\em Proceedings of the 30th annual ACM SIGACT-SIGOPS symposium on
  Principles of distributed computing}, PODC '11, pages 1--10. ACM, 2011.

\bibitem[Kos09]{Ko09}
V.~Kostakos.
\newblock Temporal graphs.
\newblock {\em Physica A: Statistical Mechanics and its Applications},
  388[6]:1007--1023, 2009.

\bibitem[Lam78]{La78}
L.~Lamport.
\newblock Time, clocks, and the ordering of events in a distributed system.
\newblock {\em Commun. ACM}, 21[7]:558--565, 1978.

\bibitem[Lyn96]{Ly96}
N.~A. Lynch.
\newblock {\em Distributed Algorithms}.
\newblock Morgan Kaufmann; 1st edition, 1996.

\bibitem[OW05]{OW05}
R.~O'Dell and R.~Wattenhofer.
\newblock Information dissemination in highly dynamic graphs.
\newblock In {\em Proceedings of the 2005 joint workshop on Foundations of
  mobile computing}, DIALM-POMC '05, pages 104--110. ACM, 2005.

\bibitem[PP98]{PP98}
P.~Panaite and A.~Pelc.
\newblock Exploring unknown undirected graphs.
\newblock In {\em Proceedings of the ninth annual ACM-SIAM symposium on
  Discrete algorithms}, pages 316--322. Society for Industrial and Applied
  Mathematics, 1998.

\bibitem[Sak99]{Sa99}
N.~Sakamoto.
\newblock Comparison of initial conditions for distributed algorithms on
  anonymous networks.
\newblock In {\em Proceedings of the eighteenth annual ACM symposium on
  Principles of distributed computing}, PODC '99, pages 173--179. ACM, 1999.

\bibitem[Sch02]{Sc02}
C.~Scheideler.
\newblock Models and techniques for communication in dynamic networks.
\newblock In {\em Proceedings of the 19th Annual Symposium on Theoretical
  Aspects of Computer Science}, STACS '02, pages 27--49. Springer-Verlag, 2002.

\bibitem[YK96]{YK96}
M.~Yamashita and T.~Kameda.
\newblock Computing on anonymous networks. i. characterizing the solvable
  cases.
\newblock {\em Parallel and Distributed Systems, IEEE Transactions on},
  7[1]:69--89, 1996.

\end{thebibliography}
\end{document}